\newcounter{modelline}
\newcommand{\mline}[1]{{\refstepcounter{modelline}\ltx@label{#1}}\quad\text{\scriptsize{\themodelline}}\quad}
\newcommand{\dL}{\textsf{dL}\xspace}
\newcommand{\dRL}{\textsf{dRL}\xspace}
\newcommand{\KeYmaeraX}{KeYmaera~X\xspace}
\newcommand{\passign}[2]{\ensuremath{#1 \coloneqq #2}}
\newcommand{\ptest}[1]{\ensuremath{?#1}}
\newcommand{\D}[1]{\ensuremath{#1{'}}}
\newcommand{\syssep}{\ensuremath{,\,}}
\newcommand{\pevolvein}[2]{\ensuremath{\{#1~\&~#2\}}}
\newcommand{\dbox}[2]{\ensuremath{\left[#1\right]#2}}
\newcommand{\didia}[2]{\ensuremath{\langle#1\rangle#2}}
\newcommand{\pchoice}[2]{\ensuremath{#1\,\cup\,#2}}
\newcommand{\prepeat}[1]{\ensuremath{#1^\ast}}
\newcommand{\ltrue}{\top}
\newcommand{\lfalse}{\bot}
\newcommand{\vars}{\mathcal{V}}
\newcommand{\reals}{\mathbb{R}}
\newcommand{\aminBrake}{\ensuremath a_\text{minBrake}}
\newcommand{\amaxBrake}{\ensuremath a_\text{maxBrake}}
\newcommand{\amaxAccel}{\ensuremath a_\text{maxAccel}}
\newcommand{\lbisubjunct}{\ensuremath{\leftrightarrow}}
\newcommand{\limply}{\ensuremath{\rightarrow}}
\newcommand{\boundvars}[1]{\ensuremath{\textsf{BV}(#1)}}
\newcommand{\stold}{\ensuremath{\omega}}
\newcommand{\stnew}{\ensuremath{\nu}}
\newcommand{\imodels}[2]{#1 \models #2}
\newcommand{\reachset}[1]{\ensuremath\llbracket #1 \rrbracket}
\newcommand{\irelmodels}[3]{(#1,#2) \in \reachset{#3}}
\newcommand{\valuein}[2]{\ensuremath#1\llbracket#2\rrbracket}
\newcommand{\truthset}[1]{\ensuremath\llbracket#1\rrbracket}
\newcommand{\pycomp}[1]{\ensuremath\textsf{Py}(#1)}
\newcommand{\pyrule}[1]{\textsc{#1}}
\newcommand{\pyeval}[2]{\ensuremath\textsf{eval}(#1)_{#2}}
\newcommand{\pybooleval}[2]{\ensuremath#1(\textsf{bool}~#2)}
\newcommand{\dethp}{\texttt{det-HP}\xspace}
\newcommand{\rref}[2][]{\prettyref{#2}}
\newcommand{\amax}{\amaxAccel}
\newcommand{\amin}{\aminBrake}
\newcommand{\amaxb}{\amaxBrake}
\title{Slow Down, Move Over: A Case Study in Formal Verification, Refinement, and Testing of the Responsibility-Sensitive Safety Model for Self-Driving Cars}
\titlerunning{Verification, Refinement, and Testing of RSS}
\author{Megan Strauss\orcidID{0009-0009-6769-2404} \and Stefan Mitsch\orcidID{0000-0002-3194-9759}}
\institute{
  Computer Science Department\\
  Carnegie Mellon University, Pittsburgh, USA\\
  \email{mstrauss@andrew.cmu.edu} \qquad \email{smitsch@cs.cmu.edu}
}
\begin{document}
\maketitle
\begin{abstract}
Technology advances give us the hope of driving without human error, reducing vehicle emissions and simplifying an everyday task with the future of self-driving cars. 
Making sure these vehicles are safe is very important to the continuation of this field. 
In this paper, we formalize the Responsibility-Sensitive Safety model (RSS) for self-driving cars and prove the safety and optimality of this model in the longitudinal direction. 
We utilize the hybrid systems theorem prover \KeYmaeraX to formalize RSS as a hybrid system with its nondeterministic control choices and continuous motion model, and prove absence of collisions. 
We then illustrate the practicality of RSS through refinement proofs that turn the verified nondeterministic control envelopes into deterministic ones and further verified compilation to Python. 
The refinement and compilation are safety-preserving; as a result, safety proofs of the formal model transfer to the compiled code, while counterexamples discovered in testing the code of an unverified model transfer back.
The resulting Python code allows to test the behavior of cars following the motion model of RSS in simulation, to measure agreement between the model and simulation with monitors that are derived from the formal model, and to report counterexamples from simulation back to the formal model.
\end{abstract}
\keywords{differential dynamic logic, refinement, testing, self-driving cars, collision avoidance, theorem proving}

\section{Introduction and Motivation}
\label{sec:introduction}
 As technology advances, it becomes both more appealing and feasible to automate everyday tasks. In doing this, there is a higher need for robust formal verification and testing strategies to ensure that these automated systems are safe. 
 Formal verification ensures that all situations encountered in a model of physics are safe, while testing provides evidence that the formal model works as expected and can give insight to bugs in a yet-to-be-verified model.
 To this end, a testing strategy analyzes some (representative) situations that an automated system may encounter.
 This paper develops a formal framework to link modeling, offline verification, online monitoring, and testing, sketched in \rref{fig:overview} and illustrated throughout the paper with an example in autonomous driving.
 
 \tikzstyle{style1} = [rectangle, rounded corners, minimum width=2cm, minimum height=1cm,text centered, draw=black, text width=2.5cm]
\tikzstyle{style2} = [rectangle, minimum width=2cm, minimum height=1.3cm, text centered, draw=black]

\tikzstyle{arrow} = [thick,->,>=stealth]
 \begin{figure}[htb]
 \centering
\begin{tikzpicture}
\node (formalize) [style1] {\footnotesize \textbf{Formalize HP}\\ \scriptsize Create HP $\alpha$};
\node (refinement) [style1, right=1.5cm of formalize] {\footnotesize \textbf{Refinement} \scriptsize Prove $D(\alpha) \leq \alpha$};
\node (python) [style1, right=2cm of refinement] {\footnotesize \textbf{Python} \\ \scriptsize Test $\pycomp{D(\alpha)}$};
\node (safety) [style2, below=.5cm of formalize,text width=3cm] {\footnotesize \textbf{Safety Proof (dL)} \\ \scriptsize $[\alpha]P$};
\node (transfer) [style2, below=.5cm of refinement,text width=3.7cm] {\footnotesize \textbf{Safety Transfer (dRL)}\\ \scriptsize $[\alpha]P \rightarrow [D(\alpha)]P$};
\node (compilation) [style2, below=.5cm of python,text width=3cm] {\footnotesize \textbf{Code Correctness} \tiny
\\ $(\pycomp{D(\alpha)}, \nu) \rightarrow (\perp, \omega)$ then $\irelmodels{\nu}{\omega}{D(\alpha)}$};

\draw [arrow] (formalize) -- node[anchor=south] {\tiny \parbox{1.5cm}{\centering deterministic $D(\alpha)$}} (refinement);
\draw [arrow] (refinement) -- node[anchor=south] {\tiny \parbox{2cm}{\centering compile \\$D(\alpha) \vartriangleright Py(D(\alpha))$}} (python);
\draw [arrow] (formalize) -- (safety);
\draw [arrow] (refinement) -- (transfer);
\draw [arrow] (python) -- (compilation);

\coordinate[below=.8cm of safety] (safetyarrowstart);
\coordinate[below=.8cm of compilation] (safetyarrowend);
\draw [-stealth, line width=2mm](safetyarrowstart) -- node[anchor=south]{\textbf{Safety Preservation}}(safetyarrowend);

\coordinate[below=.8cm of safetyarrowstart] (cexarrowend);
\coordinate[below=.8cm of safetyarrowend] (cexarrowstart);
\draw [-stealth, line width=2mm](cexarrowstart) -- node[anchor=south]{\textbf{Counterexample Transfer}}(cexarrowend);

\end{tikzpicture}
\caption{Proof and testing structure with relationships to each other}
\label{fig:overview}
\end{figure}
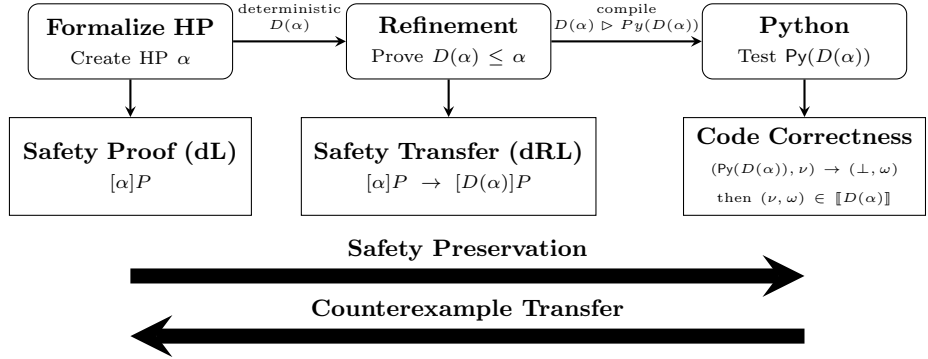

We use hybrid systems to describe the dynamic components of a car, with the goal of creating discrete computer controllers which prevent collisions and other unsafe situations. 
Analyzing such a system requires two components: a differential equation modeling the cars' motion and a car control algorithm. 
For the differential equation, we follow \cite{loos_safe_2011} and describe the longitudinal motion of a car in response to acceleration/deceleration inputs. 
For the discrete controller, we follow the Responsibility-Sensitive Safety (RSS) model \cite{shalev-shwartz_formal_2018}. 
Its discrete control choices influence the continuous motion of the car, and are given in terms of a nondeterministic control envelope that allows for many different ways of deterministic control implementation. 
In the resulting hybrid systems model, we prove the safety (defined by absence of collision) of all possible implementations within the control envelope. 
This proof step replaces the need to test through simulation whether the model is safe, i.e. checking if cars collide in any implementation of the system, but simulation can help in debugging unverified models. 
We also want the ability to analyze concrete (deterministic) controller implementations and motion models and determine whether they follow the general verified model. 
We do this through refinement proofs in differential refinement logic \cite{loos_platzer_2016} to a deterministic hybrid program followed by a verified compilation to Python which allows us to further analyze the behavior of a system. 
This way, we can analyze concrete behavior for concrete parameter choices in the formal model, and simulate behavior for analysis in select driving situations.
In order to transfer the safety proof to \emph{all encountered situations}, the remaining question after a safety proof is then to detect at runtime when the proof applies and when it does not. 
We answer this by applying ModelPlex \cite{mitsch_modelplex_2016} to generate monitors that are able to validate unverified code and motion model alike. 
We utilize the same monitors in testing and simulation to measure the robustness of control choices, or in other words, the safety-criticality of test scenarios. 
The proof, refinement, and compilation framework is set up in such a way that, if a proof step fails, compiling the formal model and analyzing it for counterexamples through testing is still possible.

\section{Background}

\subsection{Responsibility-Sensitive Safety}
We briefly summarize Responsibility-Sensitive Safety (RSS, \cite{shalev-shwartz_formal_2018}) that we use as a case study throughout this paper. 
The general purpose of RSS is to describe a safe algorithm for the interaction of (semi-)autonomous cars, with safety being defined as the avoidance of collision.
RSS defines a lane-based coordinate system which describes the position of cars with a longitudinal and lateral coordinate relative to the lane they are in on a road. 
The longitudinal coordinate measures the distance relative to a starting position along the road, which abstracts from road geometry. 
The lateral position describes the distance from the center of the road. 
The lane-based coordinate system simplifies the motion model into a linear differential equation system as in \cite{loos_safe_2011}.

\begin{table}[tb]
\caption{RSS variables and model parameters}
\label{tab:rssparams}
\begin{tabularx}{\linewidth}{l X} 
 \toprule
 Parameter & Description \\
 \midrule
 $x_1, x_2$ & positions in lateral or longitudinal direction of car 1 and 2\\
 $v_1, v_2$ & velocities in lateral or longitudinal direction of car 1 and car 2 \\
 $a_1, a_2$ & acceleration in lateral or longitudinal direction of car 1 and car 2 \\
  $a_{minBrake}$ & minimum deceleration during evasion maneuver  \\ 
 $a_{maxBrake}$ & maximum deceleration \\
 $a_{minBrakeCorrect}$ & minimum deceleration during evasion maneuver in the correct direction (only used for cars driving in opposite directions) \\
 $a_{maxAccel}$ & maximum acceleration (positive constant, used in both directions)\\
 $\rho$ & reaction time for starting evasion maneuver
\\
 \bottomrule
\end{tabularx}
\end{table}

The general idea behind the RSS control envelope is that two cars should maintain a safe distance between them, which is a function of the current speeds of the two cars. 
If at a safe distance, they are allowed to drive at any acceleration or deceleration within their cars' maximum acceleration and maximum braking bounds. 
Otherwise, if not at a safe distance, they must respond with a predefined evasion maneuver called \emph{proper response} within a specified reaction time horizon in order to remain safe.
The safe distance is specified as a function of (relative) vehicle speed and vehicle parameters, as summarized in \rref{tab:rssparams} \cite{shalev-shwartz_formal_2018}.

\subsection{Differential Dynamic Logic and Notation}
 We use differential dynamic logic (\dL) \cite{DBLP:journals/jar/Platzer17} to create models of car control and motion, and to express correctness and optimality properties of RSS. 
 Differential dynamic logic supports modeling hybrid systems complete with support for continuous dynamics expressed using differential equations. 
 Within the RSS model, we utilize discrete dynamics to describe the control options in terms of instantly changing the acceleration or deceleration that the car is following. 
 We use differential equations to model the resulting motion change in speed and position of the cars with respect to this chosen acceleration. 
 \dL comes with a proof calculus~\cite{DBLP:journals/jar/Platzer17}, which is supported in the automated theorem prover \KeYmaeraX~\cite{DBLP:conf/cade/FultonMQVP15}. 
 For the scope of this paper, the following informal understanding, summarized in \rref{tab:dlinformalsemantics} suffices. 
 Assignment of a variable $\passign{x}{\theta}$ instantly assigns the value of $\theta$ to $x$, nondeterministic assignment $\passign{x}{\ast}$ assigns any real value to $x$ (e.g. choosing an arbitrary  acceleration value). 
 Continuous evolution $\pevolvein{\D{x} = \theta}{Q}$ evolves $x$ along the differential equation $\D{x}=\theta$ for an arbitrary duration while $Q$ is true throughout (e.g. change position of cars along kinematic equations, but prevent moving backwards). 
 The test $\ptest{P}$ checks that formula $P$ is true and continues running if it is, and if not aborts the program (possibly backtracking to other nondeterministic options).
 Sequential composition $\alpha\texttt{;} \beta$ first runs hybrid program $\alpha$ and subsequently runs hybrid program $\beta$ from the resulting states of $\alpha$ (e.g. first choose acceleration, then follow motion). 
 A nondeterministic choice of operations allows for the choice of two separate hybrid programs, $\pchoice{\alpha}{\beta}$ (e.g. running $\alpha$ if there is a safe distance between two cars and running emergency braking $\beta$ otherwise). 
 Nondeterministic repetition $\prepeat{\alpha}$ repeats hybrid program $\alpha$ some $n \geq 0$ times (e.g. repeatedly running control and motion). 
  
 \begin{table}[tb]
\caption{Syntax and informal semantics of hybrid programs}
\label{tab:dlinformalsemantics}
\begin{tabularx}{\linewidth}{lX}
 \toprule
 Program & Meaning \\
 \midrule
 $\passign{x}{\theta}$ & Assigns the value of $\theta$ to $x$\\
 $\passign{x}{*}$ & Assigns any real value to $x$ \\
 $\pevolvein{\D{x} = \theta}{Q}$ & Evolves $x$ along $\theta$ for any duration $t \geq 0$ while $Q$ is true throughout\\
 $\ptest{P}$ &  Tests formula $P$ and continues if true, aborts if false\\
 $\alpha;\beta$ & Follows program $\alpha$ and subsequently follows program $\beta$ \\
 $\pchoice{\alpha}{\beta}$ & Follows either program $\alpha$ or program $\beta$, nondeterministically  \\
 $\prepeat{\alpha}$ & Repeats program $\alpha$ a nondeterministic $n \geq 0$ number of times \\
 \bottomrule
\end{tabularx}
\end{table}

 The set of \dL formulas is generated by the following grammar, where $\sim \,\in \{<,\leq,=,\neq,\geq,>\}$ and terms $f,g$ in $+,-,\cdot,/$ are over the reals, contain number literals, variables and $n$-ary rigid function symbols, plus support interpreted minimum, maximum, absolute value and trigonometric functions \cite{10.1007/978-3-031-10769-6_42}:
 \[ P,Q \coloneqq f \sim g \mid P \land Q \mid P \lor Q \mid P \limply Q \mid P \lbisubjunct Q \mid \forall x\,P \mid \exists x\,P \mid \dbox{\alpha}{P} \mid \didia{\alpha}{P} \enspace .\]

We use $f \sim g \sim h$ as shorthand for $f \sim g \land g \sim h$ where $f, g$.
Arithmetic operations, comparisons, Boolean connectives, and quantifiers are as usual.
Formulas of the form $\dbox{\alpha}{P}$ and $\didia{\alpha}{P}$ express properties about programs: $\dbox{\alpha}{P}$ is true when in all states reachable through transitions of program $\alpha$ the property $P$ is true (used for safety); $\didia{\alpha}{P}$ is true when there exists a state reachable by the program $\alpha$ in which $P$ is true (used for liveness, optimality, and runtime monitoring).

The semantics of \dL~\cite{DBLP:journals/jar/Platzer17} is a Kripke semantics in which the states of the Kripke model are the states of the hybrid system.
A state is a map $\omega: \vars \rightarrow \reals$, assigning a real value $\omega(x)$ to each variable $x \in \vars$ in the set of variables $\vars$.
We write $\truthset{P}$ to denote the set of states in which formula $P$ is true, $\omega \in \truthset{P}$ or equivalently $\omega \models P$ if formula $P$ is true at state $\omega$, $\models P$ for $P$ being valid (true in all states), and $\valuein{\omega}{e}$ to denote the real value of term \(e\) in state \(\omega\).
The semantics of hybrid programs is expressed as a transition relation $\reachset{\alpha} \subseteq \mathcal{S} \times \mathcal{S}$ of initial and final states in $\mathcal{S}$ \cite{DBLP:journals/jar/Platzer17} and we write $\irelmodels{\omega}{\nu}{\alpha}$ to express that program $\alpha$ reaches state $\nu$ when starting in state $\omega$.
\dL is decidable relative to either an oracle for the continuous first-order logic of differential equations or an oracle for the discrete fragment of \dL \cite[Theorem 11]{DBLP:conf/lics/Platzer12b}.

 \subsection{Differential Refinement Logic}

 Differential refinement logic \dRL \cite{loos_platzer_2016} defines a relationship between two hybrid programs $\alpha \leq \beta$ that tells us that all the states reachable from program $\alpha$ are also reachable by operations of $\beta$. 
 This allows us to transfer safety properties between programs: safety properties about a hybrid program also hold for all refinements of this program. 
 In this paper, we refine the RSS control envelopes to create deterministic controllers, which are guaranteed to inherit the verified safety properties of the original control envelopes.

 \subsection{ModelPlex} 
 A proof of $\dbox{\alpha}{P}$ gives us confidence in the safety of the model $\alpha$. 
 The remaining question is now whether $\alpha$ represents a useful real (driving) behavior, and whether an implementation is faithful to the conditions of the verified model. 
ModelPlex \cite{mitsch_modelplex_2016} combines an offline safety proof of the shape \(A \limply \dbox{\alpha}{P}\) with a runtime monitor checking whether two concrete states \(\stold,\stnew\) are connected by a program \(\alpha\), i.e., semantically whether \(\irelmodels{\stold}{\stnew}{\alpha}\).
The safety proof witnesses that, starting from states satisfying \(A\), \emph{all} states reachable by model \(\alpha\) satisfy \(P\), while satisfying the runtime monitor witnesses that the two concrete states \(\stold,\stnew\) are connected by the program \(\alpha\), and so state \(\nu\) inherits the safety proof, i.e., \(\imodels{\stnew}{P}\) if \(\imodels{\stold}{A}\).
The semantic runtime monitor is equivalently phrased in \dL as a monitor specification \(\didia{\alpha}{\bigwedge_{x\in \boundvars{\alpha}}(x=x^+)}\) with fresh variables \(x^+\) not occurring in \(\alpha\) \cite{mitsch_modelplex_2016}.
The \dL monitor specification allows ModelPlex, in contrast to online reachability analysis, to shift computation offline by using theorem proving in order to translate a hybrid systems model into a formula \(\phi \in \mathcal{M}\) where $\mathcal{M}$ is the set of quantifier-free, modality-free formulas over real arithmetic.
Note that the \dL monitor specification is not provable offline, since it introduces fresh variables \(x^+\).
Instead, the proof can be finished at runtime for two concrete states (a state \(\stold\) providing values for \(x\) and a state \(\stnew\) providing values for \(x^+\)) by plugging in concrete measurements for all variables of the runtime monitor.
When a runtime monitor \(\phi\) is satisfied over states \(\stold,\stnew\), we write \(\irelmodels{\stold}{\stnew}{\phi}\).

 \section{Formalization of RSS Safety and Optimality}
 \label{sec:rssformal}
In this section we begin our case study on RSS. We first formalize the RSS guidelines and verify the safety of the model. 
We then propose a way to verify that the RSS model drives optimally. 
After this we create deterministic controllers which we show are refinements of the general verified model and utilize these to illustrate RSS behavior in simulation.

\subsection{Formalization and Verification of RSS Safety}
Models in the subsequent sections follow the same structure. 
To capture this common structure, we introduce a safety template in \rref{model:safetyProofTemplate}. 
All subsequent safety proofs for the longitudinal motion regardless of direction (same or opposite direction of motion), follow this template, but with specific formulas and programs to fill in the placeholders for the safety distance \texttt{safeDist}, evolution domain constraint \texttt{edc}, free driving program \texttt{freeDriving}, and evasion maneuver program \texttt{properResponse}. 

\setcounter{modelline}{0}
\begin{model}[tb]
\caption{Model template for longitudinal motion collision avoidance \(\texttt{longitudinal}(\texttt{safeDist},\texttt{edc},\texttt{freeDriving}, \texttt{properResponse})\)}
\label{model:safetyProofTemplate}
\begin{align*}
\text{init}&\left|    
\begin{aligned}
    &\mline{line:first-init} \phantom{\land}~ x_1\leq x_2 \land \texttt{safeDist}(v_1,v_2)\leq x_1-x_2 \\
    &\mline{line:second-init} \land 0 < a_{minBrake} < a_{maxBrake} \land 0 < a_{maxAccel} \\
    &\mline{line:third-init}\land \rho>0 \\    
    &\mline{line:last-init}\land \texttt{edc}
\end{aligned}    
    \right.\\
    &\,\phantom{\bigl|}\mline{line:implication}\limply\\
                &\,\phantom{\bigl|}\mline{line:first-loopstart}\bigl[\bigl( \\
\text{ctrl} &\left|
\begin{aligned}
  &\mline{line:first-ctrl-1} \quad \phantom{\bigl(}\bigl( \ptest{\texttt{safeDist}(v_1,v_2)}\leq x_2-x_1; \ \passign{(a_1,a_2)}{\texttt{freeDriving}} \\
  &\mline{line:first-ctrl-2} \quad  \phantom{\bigl(\bigl(} \cup\,  \\
  & \mline{line:first-ctrl-3} \quad \phantom{\bigl(\bigl(} \ptest{\texttt{safeDist}(v_1,v_2)} \geq x_2-x_1; \  \passign{(a_1,a_2)}{\texttt{properResponse}}\bigl);\\
  & \mline{line:first-ctrl-3} \quad \phantom{\bigl(} \passign{t}{0}; 
 \end{aligned}
 \right.\\
\text{motion}   &\left|
\begin{aligned}
    &\mline{line:first-ode} \phantom{\bigl(} \quad \pevolvein{\D{x_1}=v_1 \syssep \D{x_2}=v_2\syssep\D{v_1}=a_1\syssep \D{v_2}=a_2 \syssep \D{t}=1}{\texttt{edc} \land t \leq \rho}
\end{aligned}
\right.\\
\text{safe} &\,\Bigl|\mline{line:first-safe} \phantom{\Bigl[\,}\bigr)^\ast\bigr]\left(x_1 \leq x_2 \right)
\end{align*}
\end{model}

The template defines preconditions (\rref{line:first-init}--\ref{line:last-init}) under which all runs of the hybrid program in \rref{line:first-loopstart}--\ref{line:first-ode} are expected to establish the safety condition in \rref{line:first-safe}. 
The preconditions \texttt{init} set parameter bounds for acceleration, deceleration, and reaction time $\rho$, make sure that the cars are oriented correctly (driving in the expected direction and car 1 is closer to the lane origin than car 2), and are at a safe distance initially. 
The $\texttt{ctrl}$ program implements the RSS control envelope by setting the accelerations of the cars based on whether the distance between them is safe: the nondeterministic choice in \rref{line:first-ctrl-2} expresses the two options of executing \texttt{freeDriving} when at a safe distance and evasion maneuver \texttt{properResponse} otherwise (we use notation $\passign{(a_1,a_2)}{\texttt{prg}}$ to emphasize that the program \texttt{prg} chooses acceleration values $a_1$ and $a_2$). 
At the boundary, either choice is possible. 
The $\texttt{motion}$ program uses a differential equation to model the continuous dynamics of driving in response to the chosen acceleration, modeled through kinematics. 
Subsequent sections utilize this template to create specific \dL models to prove several aspects of RSS safety. 
First we start with cars driving in the same longitudinal direction. 

\subsection{Longitudinal Safety}
In the longitudinal direction, two cars may either drive in the same direction (a follower car following at a safe distance behind a lead car), or in opposite directions (a narrow road with two-way traffic, or a lead car backing up). 
We describe both models as instances of the template \rref{model:safetyProofTemplate}, using the sign of their respective longitudinal velocities to determine driving direction. 
We analyze these driving situations in isolation, under the assumption that a gear change when fully stopped is necessary to transition between these situations, which also allows for switching between the formal models that guide runtime safety.

\paragraph{Same Longitudinal Direction.}
For two cars driving in the same longitudinal direction, RSS \cite{shalev-shwartz_formal_2018} defines a safe distance between these cars with respect to their velocities: if the cars were to drive with worst-case behavior (rear car with the maximum possible acceleration and lead car with the maximum possible deceleration) for the entire reaction time $\rho$, there would still be enough space for the rear car to come to a full stop before the stopping point of the lead car, i.e., before colliding, see \eqref{eq:rsssamedirsafe}.

\begin{equation}\label{eq:rsssamedirsafe}
\texttt{safeDist}_s \triangleq
\max(v_1\rho + \frac{1}{2}\amaxAccel + \frac{(v_1 + \rho \amaxAccel)^2}{2\aminBrake} - \frac{v_2^2}{2\amaxBrake}, 0)
\end{equation}

Whenever the cars are not satisfying the safe distance \eqref{eq:rsssamedirsafe}, the cars must execute the proper response behavior as an evasion maneuver. 
In a leader and follower setup, the follower car is responsible for not hitting the lead car. 
As a consequence, the proper response allows the lead car to continue with any acceleration or deceleration that does not exceed maximum braking, but requires the follower car to decelerate at rate at least $\amin$ \cite{shalev-shwartz_formal_2018}.
We will now fill in \rref{model:safetyProofTemplate} with formulas and programs \eqref{eq:rsssamefreedriving}--\eqref{eq:rsssameedc}, which gives \cite[Model 1]{loos_safe_2011}.
\begin{align}    
    \texttt{freeDriving}_s &\triangleq \passign{a_1}{\ast}; \ptest{{-}\amaxBrake\leq a_1 \leq \amaxAccel}; \label{eq:rsssamefreedriving} \\ &\phantom{\triangleq}~~ \passign{a_2}{\ast}; \ptest{{-}\amaxBrake\leq a_2 \leq \amaxAccel} \notag \\
    \texttt{properResponse}_s &\triangleq \bigl(\pchoice{\passign{a_1}{\ast}; \ptest{a_1\leq -\aminBrake}}{\ptest{v_1=0}; \passign{a_1}{0}}\bigl); \label{eq:rsssameproperresponse} \\
    &\phantom{\triangleq}~~ \bigl( \pchoice{\passign{a_2}{*}; \ptest{a_2\geq \amaxBrake}}{\ptest{v_2=0}; \passign{a_2}{0}}\bigl) \notag \\
    \texttt{edc}_s &\triangleq  v_1 \geq 0 \land v_2 \geq 0 \label{eq:rsssameedc}
\end{align}
The definition of $\texttt{properResponse}$ includes a choice of $\passign{a_{1, 2}}{0}$ if $v_{1, 2} = 0$ in order to allow time to pass in the ODE if one or both cars are stopped.

\begin{theorem}[Same Longitudinal Safety]\label{thm:rsssame}
\rref{model:safetyProofTemplate} with \texttt{safeDist} as per \eqref{eq:rsssamedirsafe}, \texttt{freeDriving} as per \eqref{eq:rsssamefreedriving}, \texttt{properResponse} as per \eqref{eq:rsssameproperresponse}, and \texttt{edc} as per \eqref{eq:rsssameedc} is valid.
\end{theorem}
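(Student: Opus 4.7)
The plan is to discharge the outer loop by the induction rule of \dL, choosing a loop invariant $I$ that implies the postcondition $x_1 \leq x_2$ and is preserved by one $\texttt{ctrl};\texttt{motion}$ cycle. Because the safe-distance precondition $\texttt{safeDist}_s(v_1,v_2) \leq x_2-x_1$ can legitimately degrade during a $\texttt{freeDriving}_s$ phase, $I$ should be phrased as a disjunction between the current safe distance and the weaker stopping-distance ordering $x_2 + \tfrac{v_2^2}{2\amaxb} \geq x_1 + \tfrac{v_1^2}{2\amin}$ that survives an ongoing $\texttt{properResponse}_s$. Conjoining $x_1 \leq x_2$, the evolution-domain facts $v_1 \geq 0 \land v_2 \geq 0$, and the parameter bounds from \texttt{init} yields a candidate that follows from the preconditions (base case) and implies the postcondition (use case) by straightforward real arithmetic.

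For the induction step I would symbolically execute the controller and split on its two branches. In the \emph{free driving} branch the rich bound $\texttt{safeDist}_s(v_1,v_2) \leq x_2 - x_1$ is available; after the ODE one must show that even after up to $\rho$ time units of worst-case acceleration $a_1 = \amax$ and worst-case deceleration $a_2 = -\amaxb$, the stopping-distance ordering is still established, which is exactly the algebraic identity that \eqref{eq:rsssamedirsafe} is tuned to encode. In the \emph{proper response} branch $a_1 \leq -\amin$ (or $v_1=0 \land a_1=0$) and $a_2 \geq -\amaxb$ (or $v_2=0 \land a_2=0$) ensure that $x_1 + v_1^2/(2\amin)$ is non-increasing and $x_2 + v_2^2/(2\amaxb)$ is non-decreasing along the flow, so the stopping-distance disjunct is preserved directly.

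To handle the ODE I would prefer a differential cut with the conserved quantities above in the proper-response branch, obviating any need for explicit solutions there, and use the closed-form solutions $v_i(t) = v_i + a_i t$, $x_i(t) = x_i + v_i t + \tfrac{1}{2} a_i t^2$ in the free-driving branch (available since each $a_i$ is constant along the flow), instantiating at an arbitrary $t \in [0,\rho]$. What remains are universally quantified polynomial inequalities in $v_1, v_2, \rho, \amax, \amin, \amaxb, t$ that the real-arithmetic backend of \KeYmaeraX must discharge.

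The main obstacle is this nonlinear arithmetic: the squared velocity terms, the outer $\max$ in \eqref{eq:rsssamedirsafe}, and the stopped-car sub-cases of $\texttt{properResponse}_s$ combine into sub-goals that are decidable in principle (polynomial arithmetic over the reals) but large enough in practice to require hand-guided case splits and careful hiding of irrelevant hypotheses. Pinning down the disjunctive invariant so that both branches preserve it while still being entailed by \texttt{init} is the principal design choice on which the tractability of the arithmetic depends, and this is where I expect most of the proof-engineering effort to go.
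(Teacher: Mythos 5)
Your plan follows essentially the same route as the paper: induction on the loop with an invariant built around the stopping-distance comparison $x_1 + \tfrac{v_1^2}{2\amin} \leq x_2 + \tfrac{v_2^2}{2\amaxb}$ conjoined with $x_1 \leq x_2$, a case split on the two controller branches, explicit ODE solutions in the free-driving branch, and hand-guided cuts aligning terms of \eqref{eq:rsssamedirsafe} with the solved dynamics to make QE tractable (the paper's specific device is a cut of the safe-distance formula with $\rho$ replaced by $t$ plus three subterm-alignment lemmas). One cosmetic difference: your disjunctive invariant is redundant, since the safe-distance condition already implies the stopping-distance ordering (all extra terms in \eqref{eq:rsssamedirsafe} are nonnegative), so your invariant collapses to the paper's purely conjunctive one; nothing is lost, but nothing is gained either.

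The one step that would fail as literally described is the proper-response branch. Differentially cutting the two monotone quantities ($x_1 + \tfrac{v_1^2}{2\amin}$ non-increasing, $x_2 + \tfrac{v_2^2}{2\amaxb}$ non-decreasing) does preserve the stopping-distance conjunct, but it does not by itself discharge preservation of $x_1 \leq x_2$, which is also part of your invariant and is the actual postcondition. That conjunct is not a pointwise differential invariant here: its derivative condition $v_1 \leq v_2$ need not hold, because with $\amin < \amaxb$ the lead car is allowed to brake harder than the follower ($a_2$ as low as $-\amaxb$ while $a_1$ is only required $\leq -\amin$), and it does not follow pointwise from the cut facts either, since $x_1 + \tfrac{v_1^2}{2\amin} \leq x_2 + \tfrac{v_2^2}{2\amaxb}$ is consistent with $x_1 > x_2$. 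You need an additional argument --- e.g., that at any crossing point the cut ordering together with $\amin < \amaxb$ forces $v_1 \leq v_2$, so no overtaking can occur, or simply fall back on the closed-form solutions (available since $a_1,a_2$ are constant along the flow) and QE, which is exactly how the paper closes this branch. So the claim that differential cuts obviate explicit solutions in the proper-response branch needs repair, though the repair is routine and does not change the overall architecture of the proof.
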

\begin{proof}
The key insight to this proof is the loop invariant $J \equiv x_1 \leq x_2 \land x_1 + \frac{v_1^2}{2\amin} \leq x_2 + \frac{v_1^2}{2\amaxb}$ for proving inductive safety using the loop invariant (LI) rule. 
$$\infer[LI]{\Gamma \vdash [\alpha*]P, \Delta}{\Gamma \vdash  J, \Delta & J \vdash P & J \vdash [\alpha]J }$$
The loop invariant $J$ summarizes that the cars maintain sufficient distance to stop before colliding. 
The resulting real arithmetic proof obligations become tractable by relating terms of the safe distance formula to terms in the solution of the differential equations using a series of cuts. 
The details of this proof are in \rref{app:casestudy}.
\qed
\end{proof}

\paragraph{Opposite Longitudinal Direction.}
The responsibility for collision avoidance is shared among the cars when they drive towards each other (in opposite longitudinal direction on the same lane).
Similar to the same longitudinal direction, the safe distance to be maintained in opposite direction driving considers the worst-case behavior for the full reaction time $\rho$ followed by the collision avoidance proper response. 
In opposite direction, the worst-case behavior is accelerating towards each other with acceleration of $\amax$ for $a_1$ and $-\amax$ for $a_2$ in the opposing direction.
For two cars driving with opposite longitudinal velocity signs, let $v_{1, \rho} = v1 + \rho \amaxAccel$ and $v_{2, \rho} = |v_2| + \rho \amaxAccel$. 
Then, \cite{shalev-shwartz_formal_2018} defines the safe distance in longitudinal opposite direction as in \eqref{eq:rssoppsafedist}.
\begin{equation}\label{eq:rssoppsafedist}
\texttt{safeDist}_o = \frac{v_1 + v_{1, \rho}}{2}\rho + \frac{(v_{1, \rho})^2}{2\aminBrake} + \frac{v_2 + v_{2, \rho}}{2} \rho + \frac{(v_{2, \rho})^2}{2\aminBrake}
\end{equation}
In this case, we assume the cars drive towards each other (car 1 driving ``forward'', i.e., away from the lane origin, car 2 driving ``backward'', i.e., towards the lane origin), and want both of them to decelerate in the proper response. 
This differs from the same longitudinal direction case, where only the follower car reacts. 
In this case, the car driving forward reacts to imminent collision with deceleration $a_1 \leq -\amin$, and the car driving backward reacts with deceleration $a_2 \geq \amin$, formalized in \eqref{eq:rssoppfreedriving}--\eqref{eq:rssoppedc}.
 \begin{align}
     \texttt{freeDriving}_o &\triangleq \passign{a_1}{\ast}; \ptest{-\amaxBrake\leq a_1 \leq \amaxAccel}; \label{eq:rssoppfreedriving} \\ &\phantom{\triangleq}~~ \passign{a_2}{\ast}; \ptest{-\amaxAccel\leq a_2 \leq \amaxBrake}\notag \\
     \texttt{properResponse}_o &\triangleq \bigl(\pchoice{(\passign{a_1}{\ast}; \ptest{a_1\leq -\aminBrake})} {(\ptest{v_1=0}; \passign{a_1}{0})}\bigl); \label{eq:rssoppproperresponse} \\
     &\phantom{\triangleq}~~ \bigl(\pchoice{(\passign{a_2}{\ast}; \ptest{a_2\geq \aminBrake})}{(\ptest{v_2=0}; \passign{a_2}{0})}\bigl) \notag \\
\texttt{edc}_o &\triangleq  v_1 \geq 0 \land v_2 \leq 0 \label{eq:rssoppedc}
 \end{align}

\begin{theorem}[Opposite Longitudinal Safety]\label{thm:rssopp}
\rref{model:safetyProofTemplate} with $\texttt{safeDist}_o$ \eqref{eq:rssoppsafedist}, $\texttt{freeDriving}_o$ \eqref{eq:rssoppfreedriving}, $\texttt{properResponse}_o$  \eqref{eq:rssoppproperresponse}, and $\texttt{edc}_o$ \eqref{eq:rssoppedc} is valid.
\end{theorem}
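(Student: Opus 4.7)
The plan is to follow the same template-based proof strategy as \rref{thm:rsssame}, instantiated for the symmetric opposite-direction setting. I would begin by applying the loop invariant (LI) rule with an invariant $J_o$ that is structurally analogous to $J$ from \rref{thm:rsssame} but strengthened to account for \emph{both} cars braking, since collision avoidance is now a shared responsibility. A natural candidate is $J_o \equiv x_1 \leq x_2 \land x_1 + \frac{v_1^2}{2\amin} \leq x_2 - \frac{v_2^2}{2\amin}$, together with the sign conditions $v_1 \geq 0 \land v_2 \leq 0$ from $\texttt{edc}_o$: car~1's rightmost braking stop lies to the left of car~2's leftmost braking stop. The two easy premises of LI ($\text{init} \vdash J_o$ and $J_o \vdash x_1 \leq x_2$) reduce to real arithmetic, with the initial implication relying on $\texttt{safeDist}_o$ dominating the gap consumed during the reaction phase.

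For the inductive step $J_o \vdash \dbox{\texttt{ctrl};\texttt{motion}}{J_o}$, I would symbolically execute the controller, splitting on the nondeterministic choice between $\texttt{freeDriving}_o$ and $\texttt{properResponse}_o$, and further on the four sub-branches within proper response (each car either actively braking or already at zero velocity). The ODE proof in each case is carried out by a chain of differential cuts (dC) that introduce the closed-form kinematic identities for $v_1, v_2, x_1, x_2$ together with monotonicity facts such as $v_2 \geq v_2(0) + a_2 t$ implied by the chosen accelerations. Under proper response, these cuts show that the stopping terms $v_i^2/(2\amin)$ evolve monotonically toward $0$, directly preserving $J_o$; under free driving, preservation instead relies on the pre-test $\texttt{safeDist}_o(v_1,v_2) \leq x_2 - x_1$ having built in the full worst-case acceleration allowance over the time window $t \leq \rho$, which may require $J_o$ to be further strengthened by a time-dependent slack term tracking elapsed $t$.

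The main obstacle, as in \rref{thm:rsssame}, will be arithmetic rather than strategic: showing that $\texttt{safeDist}_o$ dominates the combined stopping distance after worst-case acceleration for time $\rho$ mixes quadratics in $v_1, v_2, \rho, \amin, \amax$ with cross terms that quantifier elimination tends to handle only after the right sequence of dC-introduced equalities. I expect to need cuts naming $v_{1,\rho}$ and $v_{2,\rho}$ explicitly, plus care to keep signs straight ($v_2 \leq 0$ throughout the ODE) so that the $v_2^2/(2\amin)$ term correctly represents distance rather than a signed position change. The stopped sub-cases reduce the opposite-direction invariant to the one-sided stopping argument used in \rref{thm:rsssame}, and the overall proof benefits from the symmetry between the two cars once the sign flips have been normalized.
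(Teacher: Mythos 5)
Your proposal matches the paper's proof: the invariant $J_o \equiv x_1 \leq x_2 \land x_1 + \frac{v_1^2}{2\amin} \leq x_2 - \frac{v_2^2}{2\amin}$ is exactly the paper's loop invariant (written with the inequality flipped), and the overall strategy of applying the loop-invariant rule, splitting on the free-driving versus proper-response branches, and taming the arithmetic by cutting in inequalities that align terms of $\texttt{safeDist}_o$ with terms of the ODE solution is the same as in the paper's appendix. The only minor divergence is your suggestion that the free-driving case might need a time-dependent strengthening of the invariant; the paper gets by without it, handling that case purely with the term-matching cuts under $0 \leq t \leq \rho$.
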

\begin{proof}
The proof of this model follows a similar structure to the proof of \rref{thm:rsssame} in the same longitudinal direction. The key insight again is a loop invariant: $J \equiv x_1\leq x_2 \land x_2 - \frac{v_2^2}{2\amin} \geq x_1 + \frac{v_1^2}{2\amin}$.
Details of this proof are in \rref{app:casestudy}.
\qed
\end{proof}

\subsection{Formalization and Verification of RSS Optimality}

This section analyzes whether the safe distance is the minimal safe distance needed to avoid collision.  
To this end, we encode the rationale behind $\texttt{safeDist}_s$ and $\texttt{safeDist}_o$ as a hybrid program, and ask if we were to follow a smaller safe distance, would there exist collisions (which is a liveness property of the shape $P \limply \didia{\alpha}Q$).
Recall that in both cases, the safe distance was determined by one iteration of worst-case behavior of duration $\rho$, followed by executing the proper response until both cars are stopped.
\rref{model:minDistSame} captures this intuition behind the safe distance formulas in a hybrid program: 
first, the worst-case behavior in \rref{line:mindist-ctrl-1}--\ref{line:mindist-ode-1} is executed, followed by any number and duration of proper response in \rref{line:mindist-proper-1}--\ref{line:mindist-ode-2}.
We show that starting from an initial separation of $\texttt{safeDist}-\varepsilon$ for any arbitrarily small $\varepsilon$, leads to collisions, i.e., we can reach states in which $x_1>x_2$.

\setcounter{modelline}{0}
\begin{model}[tbhp]
\caption{Minimum Distance Same Direction}
\label{model:minDistSame}
\begin{align*}
\text{init}    &\left| \begin{aligned} &\mline{line:mindist-init} \phantom{\land} \texttt{init}~\text{per \rref{model:safetyProofTemplate}} \\
    &\mline{line:mindist-epsilon}\land \varepsilon > 0 \\
    &\mline{line:mindist-safedist}\land \texttt{safeDist}(v_1, v_2) - \varepsilon = x2-x1\limply\\
\end{aligned}\right.\\
&\,\phantom{\bigl|}\mline{line:mindist-loopstart}\bigl<\bigl(\\
\text{worst case} &\left|
\begin{aligned}
  &\mline{line:mindist-ctrl-1} \quad \phantom{\bigl(}\texttt{freeDriving};\\
  &\mline{line:mindist-ctrl-2} \quad \phantom{\bigl(}\passign{t}{0};\\
&\mline{line:mindist-ode-1} \phantom{\bigl(} \quad \pevolvein{\D{x_1}=v_1 \syssep \D{x_2}=v_2\syssep \D{v_1}=a_1\syssep \D{v_2}=a_2 \syssep \D{t}=1}{t \leq \rho \land \texttt{edc}}
 \end{aligned}
 \right.\\
\text{attempt evade}   &\left|
\begin{aligned}
    &\mline{line:mindist-proper-1}\phantom{\bigl\langle\bigl(~} \bigl(\texttt{properResponse;}\\
    &\mline{line:mindist-ode-2} \phantom{\bigl\langle\bigl(\bigl(~} \pevolvein{\D{x_1}=v_1 \syssep \D{x_2}=v_2\syssep  \D{v_1}=a_1\syssep \D{v_2}=a_2 \syssep \D{t}=1}{\texttt{edc}}\bigl)^\ast
\end{aligned}
\right.\\
\text{unsafe} &\,\Bigl|\mline{line:mindist-safe} \phantom{\Bigl[\,}\bigr)\bigr>\left(x_1 > x_2 \right)
\end{align*}
\end{model}

\begin{theorem}[Same Longitudinal Optimality]
\rref{model:minDistSame} with \texttt{safeDist} as per \eqref{eq:rsssamedirsafe}, \texttt{freeDriving} as per \eqref{eq:rsssamefreedriving}, \texttt{properResponse} as per \eqref{eq:rsssameproperresponse}, and \texttt{edc} as per \eqref{eq:rsssameedc} is valid.
\end{theorem}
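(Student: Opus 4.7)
The statement is a diamond (liveness) property, so the plan is to exhibit a concrete worst-case witness for every nondeterministic choice and show that the resulting trajectory brings the cars from initial separation $\texttt{safeDist}_s-\varepsilon$ to the unsafe state $x_1>x_2$. Concretely, I would repeatedly apply the diamond duals $\didia{\alpha;\beta}P\lbisubjunct\didia{\alpha}\didia{\beta}P$, $\didia{\passign{x}{*}}P\lbisubjunct\exists x\,P$, $\didia{\ptest{Q}}P\lbisubjunct Q\land P$, $\didia{\pchoice{\alpha}{\beta}}P\lbisubjunct\didia{\alpha}P\lor\didia{\beta}P$, and the unfolding $\didia{\prepeat{\alpha}}P\lbisubjunct P\lor\didia{\alpha}\didia{\prepeat{\alpha}}P$, together with the solution axiom for the linear ODE, so that the entire diamond obligation collapses to a real-arithmetic question about the endpoint of the witnessed trajectory.

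The witness for the worst-case phase is $a_1\mathrel{:=}\amaxAccel$ and $a_2\mathrel{:=}-\amaxBrake$ (both inside the admissible range, so the $\ptest{}$s pass), and the chosen ODE duration is exactly $t=\rho$, under the side condition $v_2\geq \rho\amaxBrake$ so that $v_2\geq 0$ holds throughout. After this step, the state is $x_1^{(1)}=x_1+v_1\rho+\tfrac12\amaxAccel\rho^2$, $v_1^{(1)}=v_1+\rho\amaxAccel$, $x_2^{(1)}=x_2+v_2\rho-\tfrac12\amaxBrake\rho^2$, $v_2^{(1)}=v_2-\rho\amaxBrake$. For the proper-response loop I would unfold twice: in the first iteration, take the branches $a_1\mathrel{:=}-\aminBrake$ and $a_2\mathrel{:=}-\amaxBrake$ and evolve for $t=v_2^{(1)}/\amaxBrake$ so that car 2 comes to rest; in the second iteration, take $\ptest{v_2{=}0};\passign{a_2}{0}$ for car 2 and keep $a_1\mathrel{:=}-\aminBrake$ for car 1, evolving long enough (any duration up to $v_1^{(1)}/\aminBrake$) to close the remaining gap. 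A zero-iteration tail then finishes the star via the $P$-disjunct of the unfolding.

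At this point the obligation is purely arithmetic: car 1's cumulative displacement is $v_1\rho+\tfrac12\amaxAccel\rho^2+\tfrac{(v_1+\rho\amaxAccel)^2}{2\aminBrake}$ and car 2's is $v_2\rho-\tfrac12\amaxBrake\rho^2+\tfrac{(v_2-\rho\amaxBrake)^2}{2\amaxBrake}=\tfrac{v_2^2}{2\amaxBrake}$, and their difference is exactly $\texttt{safeDist}_s$ (in the non-trivial branch of the $\max$). Combined with the hypothesis $\texttt{safeDist}_s-\varepsilon=x_2-x_1$, this yields $x_1>x_2$ after choosing the second-phase duration slightly past the meeting point, with an $\varepsilon$ margin to spare. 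The trivial branch $\texttt{safeDist}_s=0$ cannot arise because the hypothesis together with the template's $x_1\leq x_2$ forces $\texttt{safeDist}_s\geq\varepsilon>0$.

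The main obstacle is handling the evolution-domain constraint $\texttt{edc}_s\equiv v_1\geq 0\land v_2\geq 0$ cleanly: the witness only works when $v_2\geq\rho\amaxBrake$, so in the degenerate regime where $v_2$ is small I would instead begin the proper-response loop earlier, stopping car 2 inside the first ODE at $t=v_2/\amaxBrake<\rho$ and then carrying out an analogous second iteration with $a_2\mathrel{:=}0$; this case-split is tedious but straightforward because the safe-distance bookkeeping is symmetric. A secondary obstacle is that the liveness proof must choose a specific number of loop unfoldings rather than appealing to an invariant, so the \KeYmaeraX script will be structured as an explicit sequence of diamond steps and ODE solve/instantiate steps rather than a single induction.
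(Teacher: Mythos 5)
Your overall line of attack is the same as the paper's: instantiate the diamond with an explicit worst-case witness, unroll the attempt-evade loop exactly twice so that the car that stops first can take the \(\ptest{v=0};\passign{a}{0}\) branch and let time pass until the other car stops, and close the remaining arithmetic by matching the accumulated displacements against the terms of \(\texttt{safeDist}_s\) \eqref{eq:rsssamedirsafe} before invoking the initial gap \(\texttt{safeDist}_s-\varepsilon\). (The paper's sketch writes \(a_2=-\aminBrake\) for the worst-case phase, but only \(a_2=-\amaxBrake\) reproduces the \(\tfrac{v_2^2}{2\amaxBrake}\) term of the safe distance, so your choice is the right reading.) One obligation your witness uses silently: evolving the first proper-response ODE for exactly \(v_2^{(1)}/\amaxBrake\) requires \(v_1\ge 0\) throughout, i.e.\ that car 2 stops no later than car 1. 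This is in fact implied by your non-vacuity observation (\(\texttt{safeDist}_s\ge\varepsilon>0\)), but it is a nontrivial arithmetic side condition that the ODE-solution step will emit when you instantiate the duration, and it should be stated rather than assumed.

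The genuine gap is the regime \(0<v_2<\rho\amaxBrake\), which you defer as ``tedious but straightforward'': your proposed fix fails there. If car 2 brakes at \(\amaxBrake\), the evolution domain constraint \(v_2\ge0\) caps the first ODE at \(t_0=v_2/\amaxBrake<\rho\); after that, \(\texttt{properResponse}\) forces \(a_1\le-\aminBrake\), so car 1 cannot keep accelerating, and its total displacement falls short of \(v_1\rho+\tfrac12\amaxAccel\rho^2+\tfrac{(v_1+\rho\amaxAccel)^2}{2\aminBrake}\) by a fixed \(\delta>0\) that does not shrink with \(\varepsilon\); the gap closed is then at most \(\texttt{safeDist}_s-\delta\), which does not produce \(x_1>x_2\) once \(\varepsilon<\delta\). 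Letting car 2 brake more gently so the first ODE can run the full \(\rho\) does not help either, since car 2 then travels at least \(v_2\rho/2>\tfrac{v_2^2}{2\amaxBrake}\), again losing a fixed amount relative to the safe-distance bookkeeping. So within \rref{model:minDistSame} as written, no choice of witnesses reaches the collision for sufficiently small \(\varepsilon\) in this regime; you need either an additional hypothesis such as \(v_2\ge\rho\amaxBrake\) (your own side condition), or a restructured worst-case phase that lets car 2 switch to \(a_2=0\) mid-reaction-time while car 1 keeps accelerating. The paper's one-paragraph proof is silent on this case, so your instinct to flag it was correct, but it has to be resolved by strengthening the assumptions or adjusting the model, not by the case split you sketch.
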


\begin{proof}

The proof strategy for this case simulates worst case behavior, choosing $a_1=\amaxAccel$, $a_2=-\aminBrake$. 
Unrolling the loop twice enables both cars to come to a complete stop and stay stopped by picking $a_{1,2}=0$ once stopped.
The choice of accelerations and loop unrolling results in a collision.
Note that two iterations are necessary in the case that one car stops before the other one does, in order to allow for the stopped car to change its deceleration to $a_{1,2}=0$ so that the differential equation can be followed without violating the evolution domain constraint $v_{1,2} \geq 0$ until the second car stops.
The proof strategy again simplifies arithmetic proof obligations by matching terms in the safe distance formula with terms in the solution of the differential equation. 
Since the distance between the cars is initially equal to safe distance reduced by an arbitrarily small $\varepsilon$, using \rref{thm:rsssame} we conclude that the minimum safe distance between the two cars is characterized by \eqref{eq:rsssamedirsafe}. 
\qed
\end{proof}

\begin{theorem}[Opposite Longitudinal Optimality]
\rref{model:minDistSame} with \texttt{safeDist} as per \eqref{eq:rssoppsafedist}, \texttt{freeDriving} as per \eqref{eq:rssoppfreedriving}, \texttt{properResponse} as per \eqref{eq:rssoppproperresponse}, and \texttt{edc} as per \eqref{eq:rssoppedc} is valid.
\end{theorem}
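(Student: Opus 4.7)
The plan is to mirror the strategy used for Same Longitudinal Optimality, now instantiated with the opposite-direction parameters. Since the goal is a diamond modality $\didia{\alpha}{x_1 > x_2}$, it suffices to exhibit one concrete run of the program starting from the boundary distance $x_2 - x_1 = \texttt{safeDist}_o(v_1,v_2) - \varepsilon$ that ends in the unsafe state. The expected witness is the worst-case scenario that motivated \eqref{eq:rssoppsafedist}: both cars accelerate toward each other for the full reaction time $\rho$, then both brake at the minimum rate until fully stopped.

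First I would resolve the nondeterministic choices in $\texttt{freeDriving}_o$ to fix the worst-case accelerations $a_1 := \amaxAccel$ and $a_2 := -\amaxAccel$ (so that car~2, whose velocity is non-positive by $\texttt{edc}_o$, accelerates in its own negative direction toward car~1), and then evolve the ODE for the full duration $t = \rho$. By the chosen kinematics, this produces velocities $v_{1,\rho} = v_1 + \rho\amaxAccel$ and $v_2 - \rho\amaxAccel$ (whose magnitude equals $v_{2,\rho} = |v_2| + \rho\amaxAccel$), and advances the cars toward each other by exactly $\tfrac{v_1 + v_{1,\rho}}{2}\rho + \tfrac{v_2 + (v_2 - \rho\amaxAccel)}{2}\rho$ in total, matching the $\rho$-terms of $\texttt{safeDist}_o$.

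Next, I would unroll the $\ast$-loop of the proper response twice, exactly as in the same-direction optimality proof. The first iteration resolves $\texttt{properResponse}_o$ to $a_1 := -\aminBrake$, $a_2 := \aminBrake$ and evolves the ODE until the car with the smaller remaining speed reaches zero (the point at which $\texttt{edc}_o$ would otherwise be violated). The second iteration switches that stopped car to the $?v_{i}=0;\passign{a_i}{0}$ branch while the other car continues braking at $\aminBrake$ until it too reaches rest. Each car's full braking phase then contributes exactly $\tfrac{(v_{i,\rho})^2}{2\aminBrake}$ of additional closing distance, which are precisely the remaining two terms of $\texttt{safeDist}_o$.

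The main obstacle is the arithmetic bookkeeping: one must show that the cumulative displacement of the two cars across the reaction and braking phases equals $\texttt{safeDist}_o(v_1,v_2)$ exactly, so that an initial gap of $\texttt{safeDist}_o(v_1,v_2) - \varepsilon$ closes and becomes a negative separation, i.e., $x_1 > x_2$. As in \rref{thm:rsssame} and the Same Longitudinal Optimality proof, I would discharge this with a sequence of differential cuts that introduce the closed-form solutions of the ODE and then match them term-by-term against the summands of \eqref{eq:rssoppsafedist}. Care is needed with the sign of $v_2$: since $v_2 \leq 0$ throughout by $\texttt{edc}_o$, every occurrence of $|v_2|$ in the safe-distance expression must be replaced by $-v_2$ before real-arithmetic closure, and the two-iteration loop unrolling must be scheduled at the precise instant the faster-stopping car reaches zero velocity so that $\texttt{edc}_o$ is preserved throughout both ODE evolutions.
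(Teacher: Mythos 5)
Your proposal matches the paper's proof, which likewise instantiates the same-direction optimality strategy with both cars choosing worst-case accelerations toward each other ($a_1=\amaxAccel$, $a_2=-\amaxAccel$) for the full reaction time, then unrolls the proper-response loop so both cars brake at $\aminBrake$ (switching a stopped car to zero acceleration) and matches the resulting ODE-solution terms against \eqref{eq:rssoppsafedist}. The sign handling for $v_2\leq 0$ and the two-iteration unrolling you describe are exactly the details the paper relies on, so no further comparison is needed.
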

\begin{proof}
The proof strategy follows Lemma 3: we again choose worst case behaviors of both cars, namely both cars driving toward each other at maximum acceleration. 
\qed
\end{proof}

\section{Refinement from \dL to Python}

The previous section proves correctness of control decisions in all situations using \dL under an assumed model of car dynamics. 
Increased model fidelity often comes at the price of higher verification effort.
For example, if we wanted to consider tire friction, the dynamical motion model would no longer have symbolic closed-form solutions, but would require advanced differential equation reasoning \cite{DBLP:journals/jacm/PlatzerT20}.
In order to determine whether a model is sufficiently realistic, and to test concrete implementation behavior in specific situations (e.g., how abruptly a car would brake when encountering obstacles), we now discuss refinement to deterministic implementation and simulation.
This provides a way of showing fidelity evidence for models to complement the safety evidence, and to debug unverified models.

In formal models, nondeterminism is beneficial to model a large variety of concrete implementations at an abstract level: nondeterministic choice of acceleration in the controllers, nondeterministic number of loop iterations, and nondeterministic duration of ODEs within each loop iteration capture variation in total driving time and behavior. 
For execution and testing purposes, deterministic behavior is beneficial to produce repeatable, predictable, and comprehensible results.
In order to generate control code, integrate with testing, and ensure that a simulation of a system follows its formal model, we select a deterministic subset of hybrid programs to introduce an intermediate language that can then be directly compiled to Python.
This language, \dethp, supports assignment, deterministic choice (defined from nondeterministic choice and tests), and loops (defined from nondeterministic repetition and tests). 
The deterministic subset \dethp of hybrid programs is generated by the grammar below, where $\theta$ is an arithmetic term in $+,-,\cdot,/$ and $P$ is a quantifier-free formula in real arithmetic:
\[ \alpha,\beta \coloneqq \passign{x}{\theta} \mid \alpha;\beta \mid \underbrace{\pchoice{(\ptest{P};\alpha)}{(\ptest{\lnot P};\beta)}}_{\text{if}~P~\alpha~\text{else}~\beta} \mid \underbrace{\prepeat{(\ptest{P};\alpha)};\ptest{\lnot P}}_{\text{while}~P~\alpha} \]

Starting from the formal models in \rref{sec:rssformal}, we need a way of resolving the following points of nondeterminism when examining the behavior of these models:
First, the nondeterministic choice between free driving and proper response at the boundary of the safe distance region needs to be resolved (e.g., aggressively by favoring \texttt{freeDriving}, or conservatively by favoring \texttt{properResponse}). 
Second, the nondeterministic choice of acceleration value in both \texttt{freeDriving} and \texttt{properResponse} must be resolved with a deterministic computation of acceleration within the acceleration bounds (e.g., a bang-bang controller implementation will only choose the values of $\amax, \amin, $ and $\amaxb$). 
Third, the nondeterministic duration of differential equations and hence the nondeterministic control cycle time must be resolved (e.g., conservatively by picking the maximum reaction time $\rho$, but not violating the remaining conditions of the evolution domain constraint).
Fourth, the nondeterministic number of repetitions must be resolved (e.g., by picking a test length or a finite amount of time to be followed). 
Note that the model parameter choices and initial values for variables are not sources of nondeterminism as they are symbolically defined and preset. 

We now illustrate \dethp with a deterministic controller \texttt{freeDriving-det} that chooses the same acceleration for every iteration of $\texttt{freeDriving}$ below:

    \[\texttt{freeDriving-det} \triangleq \passign{a_1}{\amax}; \passign{a_2}{-\amaxb}\]
    
In order to inherit \rref{thm:rsssame}, we use \dRL to prove that \texttt{freeDriving-det} reaches a subset of the states of $\texttt{freeDriving}_s$ \eqref{eq:rsssamefreedriving}. 
\begin{theorem}[Refinement]
    The deterministic free driving controller refines the RSS free driving control envelope, i.e., $\texttt{freeDriving-det} \leq \texttt{freeDriving}_s$ \eqref{eq:rsssamefreedriving} is valid under assumptions $\Gamma \equiv \amax>0 \land \amaxb > 0$.
\end{theorem}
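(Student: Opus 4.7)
The refinement relation $\alpha \leq \beta$ means $\reachset{\alpha} \subseteq \reachset{\beta}$, so the plan is to show that every final state reached by $\texttt{freeDriving-det}$ starting from a state satisfying $\Gamma$ is also reached by some resolution of the nondeterminism in $\texttt{freeDriving}_s$. Concretely, $\texttt{freeDriving-det}$ produces a final state with $a_1 = \amax$ and $a_2 = -\amaxb$ (all other variables unchanged), and we need $\texttt{freeDriving}_s$ to admit the same final state via an appropriate choice for its two $\passign{\cdot}{\ast}$ assignments that also passes the subsequent tests.

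The proof proceeds compositionally using the \dRL refinement rules. First, apply the rule for sequential composition to split the goal $\texttt{freeDriving-det} \leq \texttt{freeDriving}_s$ into two sub-refinements, one for each assignment block. For the first block, we refine $\passign{a_1}{\amax}$ by $\passign{a_1}{\ast}; \ptest{{-}\amaxBrake \leq a_1 \leq \amaxAccel}$: this uses the rule that a deterministic assignment refines a nondeterministic assignment-plus-test whenever the assigned value satisfies the test. This leaves the arithmetic obligation ${-}\amaxBrake \leq \amax \leq \amaxAccel$, which reduces (the right conjunct is trivial since $\amax = \amaxAccel$) to ${-}\amaxBrake \leq \amaxAccel$, discharged immediately by $\Gamma \equiv \amax > 0 \land \amaxb > 0$. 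The second block is handled analogously: $\passign{a_2}{-\amaxb}$ refines $\passign{a_2}{\ast}; \ptest{{-}\amaxBrake \leq a_2 \leq \amaxAccel}$ provided ${-}\amaxBrake \leq -\amaxb \leq \amaxAccel$, where the left conjunct is $-\amaxb \leq -\amaxb$ and the right is $-\amaxb \leq \amaxAccel$, both immediate from $\Gamma$.

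Putting the two sub-refinements back together via the sequential composition rule yields the desired refinement. The only non-routine step is recognizing which \dRL rule instantiations produce exactly the arithmetic side conditions above; the side conditions themselves are trivial under $\Gamma$ and are dispatched by real arithmetic. I expect no real obstacle, since the deterministic controller picks witnesses that lie in the interior (or on the boundary) of the admissible interval of the control envelope by construction.
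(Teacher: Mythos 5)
Your proposal is correct and follows essentially the same route as the paper: a compositional \dRL argument via the sequential composition rule, reducing to the insight that a deterministic assignment refines a nondeterministic assignment whose subsequent test is satisfied by the assigned value, with the arithmetic side conditions ${-}\amaxBrake \leq \amaxAccel \leq \amaxAccel$ and ${-}\amaxBrake \leq -\amaxBrake \leq \amaxAccel$ discharged from $\Gamma$. The only (immaterial) difference is the grouping: you cut between the two assignment-plus-test blocks, whereas the paper first pairs the two deterministic assignments with the two nondeterministic assignments and then dispatches the tests after the assignments.
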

\begin{proof}
\newcommand{\accbounds}[1]{-\amaxBrake\leq #1 \leq \amaxAccel}
\newcommand{\assignone}[0]{\passign{a_1}{\amax}}
\newcommand{\assigntwo}[0]{\passign{a_2}{-\amaxb}}
\newcommand{\assignthree}[0]{\passign{a_1}{\ast}}
\newcommand{\assignfour}[0]{\passign{a_2}{\ast}}
We prove by refinement in \dRL.
Let $P \equiv \accbounds{a_1}$ and $Q \equiv \accbounds{a_2}$:  
{
\footnotesize
\[
\infer[;]{ \Gamma \vdash \assignone;\assigntwo  \leq \assignthree;\ptest{P}; \assignfour;\ptest{Q}}{\textbf{B1} \qquad \infer[:=]{ \Gamma \vdash [\assignone;\assigntwo]\ptest{\ltrue} \leq P}{\infer[?]{\Gamma, a_1=\amaxAccel, a_2=-\amaxBrake \vdash \ptest{\ltrue} \leq P}{\infer[\reals]{\Gamma, a_1=\amaxAccel, a_2=-\amaxBrake \vdash \ltrue \rightarrow P}{\ast}}}}
\]
\[
\infer{\textbf{B1}}{\infer[;]{\Gamma \vdash \{\assignone;\assigntwo \} \leq \{ \assignthree;\assignfour \}}{
\infer[*]{\Gamma \vdash \assignone \leq \assignthree}{\ast} \ \ \infer[:=]{\Gamma \vdash [\assignone](\assigntwo \leq \assignfour)}{\infer[*]{\Gamma, a_1=\amaxAccel \vdash \assigntwo \leq \assignfour}{\ast}}
}}
\]
}

In this proof we utilize the test, sequential and assignment rules \cite{loos_platzer_2016} to show that each step that the deterministic controller takes is within the bounds that the nondeterministic controller sets. 
The main insight is that deterministic assignments are refinements of unguarded nondeterministic assignments.
\qed
\end{proof}

The intermediate language \texttt{det-HP} serves as a stepping stone towards compilation to Python. We introduce the compilation rules in \rref{fig:compilationrules}, where 
we use the following subset of the Python expression and statement syntax \cite{kohl_2020} (for simplicity, we will ignore the difference between $\reals$ and Float, which can be addressed with interval arithmetic~\cite{DBLP:conf/pldi/BohrerTMMP18}):
\newcommand{\pexpr}{\ensuremath\text{pexpr}}
\newcommand{\pprg}{\ensuremath\text{pprg}}
\begin{align*}
\pexpr & \coloneqq \text{true} \mid \text{false} \mid z \in \mathbb{Z} \mid x \in \text{Float} \\
& \phantom{\coloneqq} \mid -\pexpr \mid \pexpr \circ \pexpr \mid \pexpr \bullet \pexpr\\
& \phantom{\coloneqq} \mid \texttt{not}~\pexpr \mid \pexpr~\texttt{and}~\pexpr \mid \pexpr~\texttt{or}~\pexpr\\
\pprg &\coloneqq \texttt{x=}\pexpr \mid \texttt{if}~\pexpr\text{:}~\pprg~\texttt{else:}~\pprg \mid \texttt{while}~\pexpr\texttt{:}~\pprg \mid \pprg;\pprg
\end{align*}

The simplified syntax uses arithmetic operators $\circ \in \{\texttt{+},\texttt{-},\texttt{*},\texttt{/},\texttt{**}\}$ and comparison operators $\bullet \in \{\texttt{<},\texttt{<=},\texttt{>=},\texttt{>},\texttt{==},\texttt{!=}\}$.
\begin{figure}[tb]
{
\begin{align*}
\textbf{Programs}\\
\pycomp{\passign{x}{\theta}} &\vartriangleright \pycomp{x}\texttt{=}\pycomp{\theta}\\
\pycomp{\pchoice{\ptest{P};\alpha}{\ptest{\lnot P};\beta}} &\vartriangleright \texttt{if}~\pycomp{P}\texttt{:}~\pycomp{\alpha}~\texttt{else:}~\pycomp{\beta}\\
\pycomp{\prepeat{\{\ptest{P};\alpha\}};\ptest{\lnot P}} &\vartriangleright \texttt{while}~\pycomp{P}:~ \pycomp{\alpha}~\texttt{else pass}\\
\textbf{Arithmetic}\\
\pycomp{z} & \vartriangleright \texttt{z} && \text{for number literal $z$}\\
\pycomp{x} & \vartriangleright \texttt{x} && \text{for variable $x$}\\
\pycomp{\theta+\eta} &\vartriangleright \pycomp{\theta}\texttt{+}\pycomp{\eta}\\
\pycomp{\theta-\eta} &\vartriangleright \pycomp{\theta}\texttt{-}\pycomp{\eta}\\
\pycomp{\theta \cdot \eta} &\vartriangleright \pycomp{\theta}\texttt{*}\pycomp{\eta}\\
\pycomp{\frac{\theta}{\eta}} &\vartriangleright \pycomp{\theta}\texttt{/}\pycomp{\eta}\\
\pycomp{\theta^\eta} &\vartriangleright \pycomp{\theta}\texttt{**}\pycomp{\eta}\\
\textbf{Boolean}\\
\pycomp{\ltrue} &\vartriangleright \texttt{True}\\
\pycomp{\lfalse} &\vartriangleright \texttt{False}\\
\pycomp{\theta < \eta} & \vartriangleright \pycomp{\theta}\texttt{<}\pycomp{\eta} && \text{similar for}~\leq,=,\neq,\geq,>\\
\pycomp{\lnot P} &\vartriangleright \texttt{not}~\pycomp{P}\\
\pycomp{P \land Q} & \vartriangleright \pycomp{P}~\texttt{and}~\pycomp{Q}\\
\pycomp{P \lor Q} &\vartriangleright \pycomp{P}~\texttt{or}~\pycomp{Q}
\end{align*}
}%
\caption{$\texttt{det-HP}$ to Python compilation rules}
\label{fig:compilationrules}
\end{figure}
The compilation from deterministic hybrid programs to the restricted Python syntax preserves safety, see Lemmas \ref{lem:terms}--\ref{lem:formulas} and \rref{thm:compilationcorrectness}.
We abbreviate arithmetic evaluation \cite{kohl_2020} as $\pyeval{\pycomp{\theta}}{\nu} \rightarrow u$ to real value $u$ and Boolean evaluation as $\pybooleval{\nu}{\pycomp{\theta}} \rightarrow v$ to Boolean value $v$.

\begin{theoremEnd}[end,restate,category=compilation,text link=]{lemma}[Term compilation is correct]\label{lem:terms}
Assuming Float=$\reals$, terms evaluate equivalently, i.e., if $\pyeval{\pycomp{\theta}}{\nu} \rightarrow u$ then $\valuein{\nu}{\theta} = u$.
\end{theoremEnd}
\begin{proof}
    By structural induction over \dL terms, see \rref{app:compilation}.
    \qed
\end{proof}
\begin{proofEnd}
    Structural induction over \dL terms as follows:
    \begin{description}
    \item[Number literals] $\pyeval{n}{\nu} \rightarrow n$ for all states $\nu$ and $\valuein{\nu}{n} = n$
    \item[Variables] $\pyeval{v}{\nu} \rightarrow \nu(v)$ and \dL variable valuation $\valuein{\nu}{v} = \nu(v)$
    \item[Negation] $\pyeval{-\theta}{\nu} \rightarrow -c$ for $\pyeval{\theta}{\nu}=c$ and \dL negation $\valuein{\nu}{-\theta}=-\valuein{\nu}{\theta}$
    \item[Arithmetic operator] for terms $\theta,\eta$ and arithmetic operator $\sim$, in Python $\pyeval{\theta \sim \eta}{\nu} \rightarrow c \sim d$ for $\pyeval{\theta}{\nu} \rightarrow c$ and $\pyeval{\eta}{\nu} \rightarrow d$ and \dL $\valuein{\nu}{\theta \sim \eta}=\valuein{\nu}{\theta} \sim \valuein{\nu}{\eta}$ \qed
    \end{description}
\end{proofEnd}

\begin{theoremEnd}[end,restate,category=compilation,text link=]{lemma}[Formula compilation is correct]\label{lem:formulas}
Formulas evaluate equivalently, i.e., $\nu \models \phi$ iff $\pybooleval{\nu}{\pycomp{\phi}} \rightarrow \texttt{true}$.
\end{theoremEnd}
\begin{proof}
    By structural induction over \dL formulas, see \rref{app:compilation}.
    \qed
\end{proof}
\begin{proofEnd}
    Structural induction over \dL formulas as follows:
    \begin{description}
    \item[Comparisons] for terms $\theta,\eta$ and comparison operators $\sim \in \{<,\leq,=,\neq,\geq,>\}$, in Python $\pybooleval{\nu}{\pycomp{\theta < \eta}} \rightarrow c~\texttt{<}~d$ (accordingly for $\leq,=\neq,\geq,>$) for $\pyeval{\pycomp{\theta}}{\nu} \rightarrow c$ and $\pyeval{\pycomp{\eta}}{\nu} \rightarrow d$ and in \dL $\valuein{\nu}{\theta \sim \eta}=\valuein{\nu}{\theta} \sim \valuein{\nu}{\eta}$
    \item[Logical connectives] for formulas $\phi,\psi$ and operators $\lnot,\land,\lor$, in Python\linebreak $\pybooleval{\nu}{\pycomp{\phi \land \psi}} \rightarrow p~\texttt{and}~q$ for $\pybooleval{\nu}{\pycomp{\phi}} \rightarrow p$ and $\pybooleval{\nu}{\pycomp{\psi}} \rightarrow q$ and in \dL $\valuein{\nu}{\phi \land \psi}=\valuein{\nu}{\phi} \land \valuein{\nu}{\psi}$, accordingly for $\lnot,\lor$ \qed
    \end{description}
\end{proofEnd}

\begin{theoremEnd}[end,restate,category=compilation,text link=]{thm}[Compilation of \dethp to Python is correct]\label{thm:compilationcorrectness}
    All states reachable with the compiled Python program are also reachable by the source HP program, i.e. 
    $(\pycomp{\alpha}, \nu) \rightarrow (\perp, \omega)$ then $\irelmodels{\nu}{\omega}{\alpha}$.
\end{theoremEnd}
\begin{proof}
By structural induction over hybrid programs, see \rref{app:compilation}.
\qed
\end{proof}
\begin{proofEnd}
By structural induction over hybrid programs. 
Recall that states in Python \cite{kohl_2020} have the form $\nu = (S,H,M)$ (each mapping variables to values), which for simplicity we restrict to programs operating only on the stack. 
As a result, we consider states in Python $S: \vars \rightarrow \reals$ the same as states in \dL mapping variables to reals $\nu: \vars \rightarrow \reals$. 
The induction base case is $(\perp, \nu)$ so $\irelmodels{\nu}{\nu}{\ptest{\ltrue}}$ and induction hypothesis $(\pycomp{\alpha}, \nu) \rightarrow (\perp, \omega) $ then $\irelmodels{\nu}{\omega}{\alpha}$.

\begin{description}
    \item[If statement $\pycomp{\bm{\pchoice{\ptest{P};\alpha}{\ptest{\lnot P};\beta}}} \vartriangleright \texttt{if}~\pycomp{\bm{P}}\texttt{:}~\pycomp{\bm{\alpha}}~\texttt{else:}~\pycomp{\bm{\beta}}$]
    Let\linebreak $(\texttt{if}~\pycomp{P}\texttt{:} \pycomp{\alpha}~\texttt{else:}~\pycomp{\beta},\nu) \rightarrow (\perp, \omega)$.
    We have to show that $\irelmodels{\nu}{\omega}{\pchoice{\ptest{P};\alpha}{\ptest{\lnot P};\beta}}$.
    \begin{description}
    \item[Case $\bm{\nu \models P}$:]
    We have $\pybooleval{\nu}{\pycomp{P}} \xrightarrow{} \texttt{true}$ by \pyrule{bool-eval} and \pyrule{bool-true} \cite[p. 83]{kohl_2020} and thus $\irelmodels{\nu}{\nu}{\ptest{P}}$ by \rref{lem:formulas} and \dL.
    By rule \pyrule{if-true} therefore $(\texttt{if}~\pycomp{P}\texttt{:}~\pycomp{\alpha}~  \texttt{else:}~\pycomp{\beta},\nu) \rightarrow (\pycomp{\alpha},\nu)$ and in turn $(\pycomp{\alpha},\nu) \rightarrow (\perp, \omega)$. 
    Now $\irelmodels{\nu}{\omega}{\alpha}$ by I.H. and thus $\irelmodels{\nu}{\omega}{\ptest{P};\alpha}$ by \dL.
    
    \item[Case $\bm{\nu \not\models P}$:]
    We have $\pybooleval{\nu}{\pycomp{P}} \xrightarrow{} \texttt{false}$ by \pyrule{bool-eval} and \pyrule{bool-false} and thus $\irelmodels{\nu}{\nu}{\ptest{\lnot P}}$ by \rref{lem:formulas} and \dL.
    By rule $\pyrule{if-false}$ therefore $(\texttt{if}~\pycomp{P}~\pycomp{\alpha}~\texttt{else:}\pycomp{\beta},\nu) \rightarrow (\pycomp{\beta},\nu)$ and in turn $(\pycomp{\beta},\nu) \rightarrow (\perp, \omega)$. 
    Now $\irelmodels{\nu}{\omega}{\beta}$ by I.H. and thus $\irelmodels{\nu}{\omega}{\ptest{\neg P}; \beta}$ by \dL.    
    \end{description}
    Now we have that either $\irelmodels{\nu}{\omega}{\ptest{P};\alpha}$ or  $\irelmodels{\nu}{\omega}{\ptest{\lnot P};\beta}$ and thus we conclude $\irelmodels{\nu}{\omega}{\pchoice{\ptest{P};\alpha}{\ptest{\lnot P};\beta}}$ by \dL. 
    
    \item[Sequential composition $\pycomp{\bm{\alpha;\beta}}\vartriangleright\pycomp{\bm{\alpha}}\texttt{;}\pycomp{\bm{\beta}}$] 
    We have to show $\irelmodels{\nu}{\omega}{\alpha;\beta}$. 
    Let $(\pycomp{\alpha;\beta}, \nu) \rightarrow (\perp, \omega)$. 
    By \pyrule{sequence-first} and \pyrule{sequence-elim} \cite[p. 91]{kohl_2020}, there exists state $\mu$ such that $(\pycomp{\alpha}, \nu) \rightarrow (\perp, \mu)$ and $(\pycomp{\beta}, \mu) \rightarrow (\perp, \omega)$. 
    This means that $\irelmodels{\nu}{\mu}{\alpha}$ and $\irelmodels{\mu}{\omega}{\beta}$ by I.H and we conclude $\irelmodels{\nu}{\omega}{\alpha;\beta}$ by \dL.
    \item[Assignment $\pycomp{\bm{\passign{x}{\theta}}} \vartriangleright \pycomp{\bm{x}}\bm{=}\pycomp{\bm{\theta}}$] 
    We have to show that $\irelmodels{\nu}{\omega}{\passign{x}{\theta}}$. 
    From \texttt{assign-exec} we know $(\pycomp{x}=\pycomp{\theta},\nu) \rightarrow (\perp, \nu[x \mapsto u])$ for $\pyeval{\pycomp{\theta}}{\nu} \rightarrow u$. 
    As a result, there is some state $\omega$ that agrees with all values of $\nu$ except for the value of $\pycomp{x}$, so $\omega(\pycomp{x}) = \valuein{\nu}{\pycomp{\theta}}$ and we conclude $\irelmodels{\nu}{\omega}{\passign{x}{\theta}}$ by \dL from \rref{lem:terms}.
    
    \item[While loop $\pycomp{\bm{\prepeat{(\ptest{P};\alpha)};\ptest{\neg P}}} \vartriangleright \texttt{\bf while}~\pycomp{\bm{P}}:~\pycomp{\bm{\alpha}}~\texttt{\bf else pass}$] 
    We have to show that $\irelmodels{\nu}{\omega}{\prepeat{(\ptest{P};\alpha)};\ptest{\lnot P}}$. 
    Let \[(\texttt{while}~\pycomp{P}:~\pycomp{\alpha}~\texttt{else pass}, \nu) \rightarrow (\perp, \omega) \enspace .\]
    We prove by induction on the loop iterations.    
    The induction base case is $\nu \models \lnot P$ and therefore $(\texttt{while}~\pycomp{P}:~\pycomp{\alpha}~\texttt{else pass}, \nu) \rightarrow (\perp, \nu)$ by \pyrule{while-eval} \cite[p. 94]{kohl_2020}, \pyrule{while-false} \cite[p. 94]{kohl_2020}, and \pyrule{pass} \cite[p. 92]{kohl_2020} and in turn $\irelmodels{\nu}{\nu}{\ptest{\ltrue}}$.
    Assume $(\texttt{while}~\pycomp{P}\texttt{:}~\pycomp{\alpha}~\texttt{else pass},\nu) \rightarrow (\perp,\mu)$ then $\irelmodels{\nu}{\mu}{\prepeat{(\ptest{P};\alpha)};\ptest{\lnot P}}$.
    \begin{description}
    \item[Case $\bm{\mu \models P}$] 
    We have $\pybooleval{\mu}{\texttt{true}}$ by \pyrule{while-eval}. 
    Then, 
    \begin{multline*}
    (\texttt{while}~\pycomp{P}\texttt{:}~\pycomp{\alpha}~\texttt{else pass},\mu) \rightarrow\\ 
    (\pycomp{\alpha};\texttt{while}~\pycomp{P}\texttt{:}~\pycomp{\alpha}~\texttt{else pass},\mu)
    \end{multline*} 
    by \pyrule{while-true}. 
    By sequential composition there exists state $\tilde{\mu}$ such that $(\pycomp{\alpha},\mu) \rightarrow (\perp,\tilde{\mu})$ and $(\texttt{while}~\pycomp{P}\texttt{:}~\pycomp{\alpha}~\texttt{else pass},\tilde{\mu}) \rightarrow (\perp,\omega)$.
    Now $\irelmodels{\mu}{\tilde{\mu}}{\alpha}$ by outer I.H. and $\irelmodels{\tilde{\mu}}{\omega}{\prepeat{(\ptest{P};\alpha)};\ptest{\lnot P}}$ by inner I.H. and we conclude $\irelmodels{\nu}{\omega}{\prepeat{(\ptest{P};\alpha)};\ptest{\lnot P}}$ by \dL.

    \item[Case $\bm{\mu \not\models P}$]
    Then $(\texttt{while}~\pycomp{P}:~\pycomp{\alpha}~\texttt{else pass},\mu) \rightarrow (\texttt{pass},\mu)$ by \pyrule{while-eval} and \pyrule{while-false} and in turn $(\texttt{pass},\mu) \rightarrow (\perp,\omega)$ by \pyrule{pass}. 
    Now $\irelmodels{\mu}{\omega}{\prepeat{(\ptest{P};\alpha)};\ptest{\lnot P}}$ by outer I.H. and we conclude $\irelmodels{\nu}{\omega}{\prepeat{(\ptest{P};\alpha)};\ptest{\lnot P}}$ by inner I.H. \qed
    \end{description}
\end{description}
\end{proofEnd}

\rref{thm:compilationcorrectness} ensures that any state reached in a compiled Python program will also be reachable in the source $\texttt{det-HP}$ program. 
Together with the refinement step from our original model, we therefore know that any counterexample found in the compiled simulation will also be a counterexample of the original formal \dL model.

\section{Testing and Monitoring by Example}
\label{sec:testing}

The previous sections introduced safety verification and refinement proofs that eliminate the need for correctness testing. 
These two steps, if completed, assure that the system is safe under all implementations. 
In this section, we consider testing and simulation as a method to provide insight for correcting the formal model and/or an implementation when either of these steps fails, and as a way to assess the fidelity of the formal model and experiment with model parameter selection once the proofs are completed.

\subsection{Monitor Structure}
In order to assess whether the simulation and the formal model agree and flag differences, we utilize ModelPlex~\cite{mitsch_modelplex_2016}. 
ModelPlex monitors are formally verified and derived automatically from a hybrid systems model; they flag whenever states encountered at runtime (e.g., in simulation) disagree with the expectations of the formal model.
For this, the monitors collect the relationship between starting and final states along all possible execution paths of the formal model.
For example, a monitor derived from the discrete statements of \rref{model:safetyProofTemplate} as instantiated in \rref{thm:rsssame} is illustrated below (variables of the initial state are $x,y,z$, variables of the final state are $x^+,y^+,z^+$).
\begin{align*}
     & (\texttt{safeDist}_s(v_1,v_2) \leq x_2-x_1 \land -\amaxBrake \leq a_1^+ \leq \amaxAccel \\
     & \qquad \land -\amaxBrake \leq a_2^+ \leq \amaxAccel \land t^+=0 \land \texttt{edc}_s)\\
\lor & (\texttt{safeDist}_s(v_1,v_2) \geq x_2-x_1 \land (a_1^+ \leq -\aminBrake \lor (v_1=0 \land a_1^+=0))\\ 
& \qquad \land (a_2^+ \geq \amaxBrake \lor (v_2=0 \land a_2^+=0)) \land t^+=0 \land \texttt{edc}_s)
\end{align*}

\subsection{Examples of Monitor Usage}

\paragraph{Monitoring for faulty implementation.} 
Our first example illustrates the use of testing and monitoring when the refinement proof cannot be completed due to an incorrect implementation. 
With user-defined test scenarios, ModelPlex monitors in simulation highlight the vulnerabilities of the controller implementation as an intermediate step towards finding implementation fixes and completing the refinement proof. 
In this case, counterexamples found in simulation represent control decisions of the unverified controller implementation that violate the expectations of the formal model. 
\begin{figure}[t!]
    \centering
    \includegraphics[width=.45\textwidth]{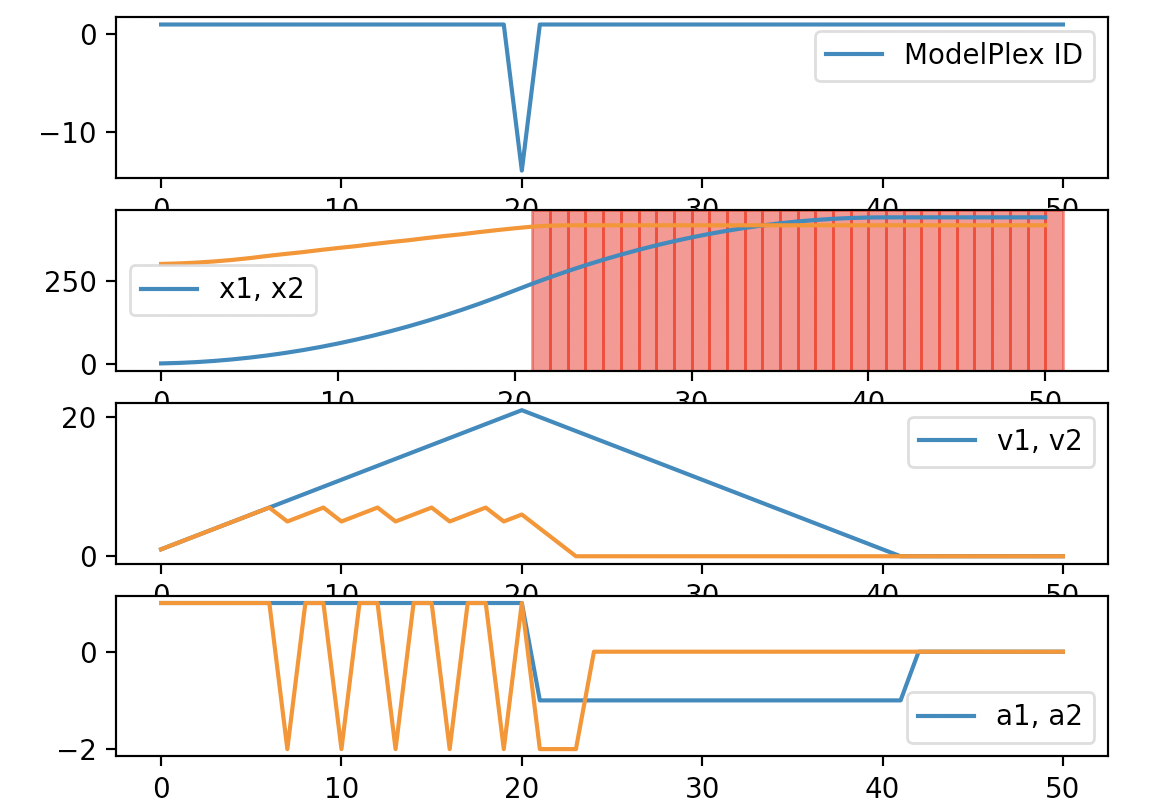}
    \caption{Leader-follower same direction test scenario, with rear car control implementation violating the conditions of RSS. Plots from top to bottom are: ModelPlex monitor verdict, car positions $x_1,x_2$, car speeds $v_1,v_2$, and accelerations $a_1,a_2$.}
    \label{fig:samedirectionplot3}
\end{figure}
In \rref{fig:samedirectionplot3}, a front and a rear car are driving in the same direction, while the rear car follows a faulty controller. 
The user-defined test scenario is an example of a boundary test by picking worst case behavior where the rear car is accelerating at maximum acceleration whenever allowed by its control guards. 
The ModelPlex monitor derived from \rref{model:safetyProofTemplate} with the formulas and programs of \rref{thm:rsssame} fails at the point of violating RSS: executing the proper response instead of the faulty controller at time time of the monitor alarm would have prevented collision, but following the faulty controller makes the rear car exceed the front car's position (i.e., cause a collision). 
The monitor verdict (``ModelPlex ID'' in \rref{fig:samedirectionplot3}) flags the violated condition of the monitor, which directly relates to a path through the formal model: the faulty controller allowed to continue accelerating forward for one more time step when it should have been following the proper response already.
In this case, a boundary test reveals the controller implementation flaw, but other implementation bugs are more subtle and require extensive search.
For future work, we envision integration of monitors with falsification techniques.

\paragraph{Monitoring for modeling flaws.}
With a setup similar to monitoring for faulty implementations, we address modeling flaws in unverified models. 
User-defined test scenarios allow us to examine the behavior of the formal model through monitors: the formal model is faulty when finding scenarios that end in collision without a prior ModelPlex monitor alarm.
In this case, the insight for fixing the bug is not merely the (missing) monitor alarm itself, but the sequence of control actions that lead up to the collision.

\paragraph{Parameter selection.}
The last example we include illustrates how to use tests for parameter selection once the formal model and the refinement to a concrete controller implementation is proved.
The concrete choices for the symbolic parameters, such as $\amaxBrake$, $\aminBrake$, $\amaxAccel$, and $\rho$, heavily influence the specific behavior of the controller implementation.
Simulation with specific parameter choices is a useful technique to inspect the expected real-world behavior, and completes the circle of interaction between formal methods and testing. 
Formal methods guarantee correctness in a model of physics, refinement proofs and correct compilation guarantee correctness of implementations, and simulation and testing allow for inspecting the fidelity of the formal model and the concrete behavior of the implementation under specific parameter choices.
\begin{figure}[b!]
    \centering
    \subfigure{\includegraphics[width=.49\textwidth]{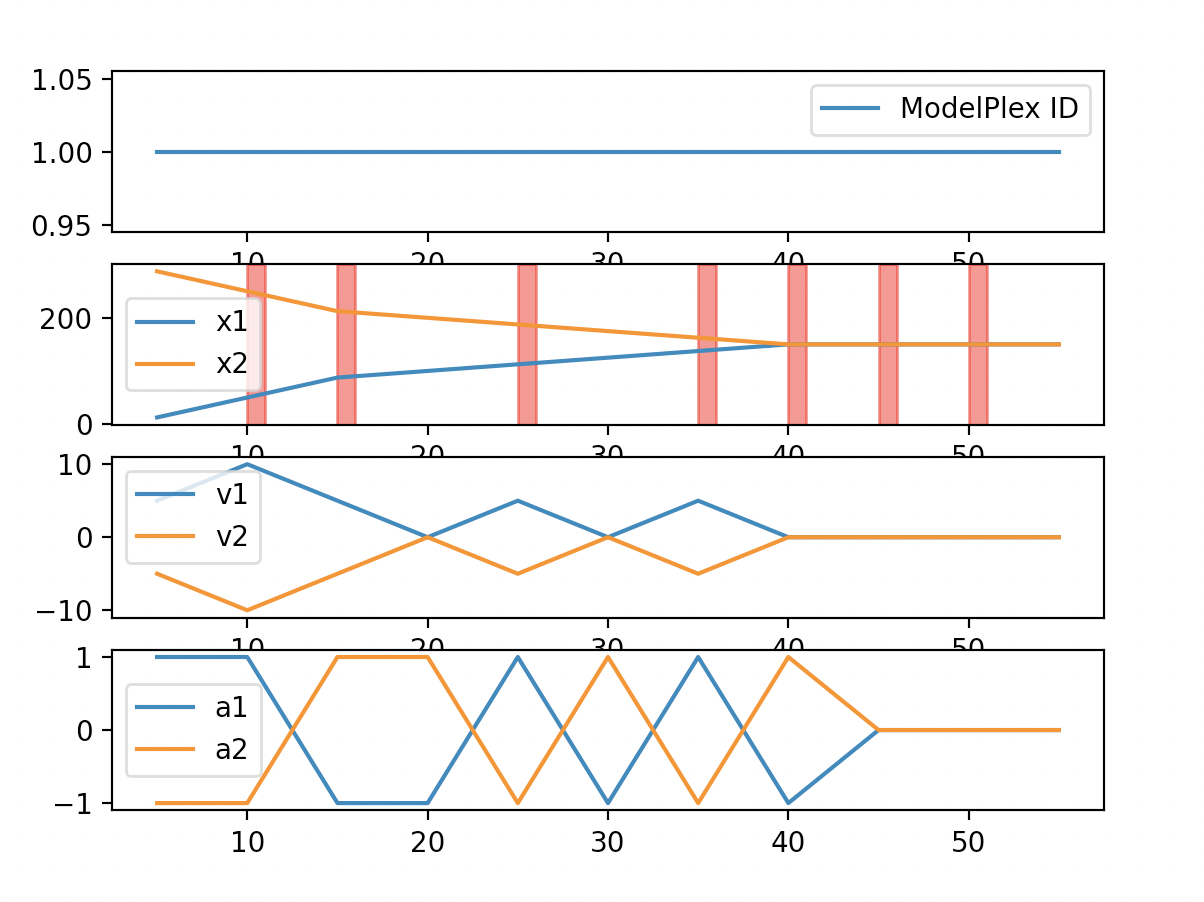}}
    \subfigure{\includegraphics[width=.49\textwidth]{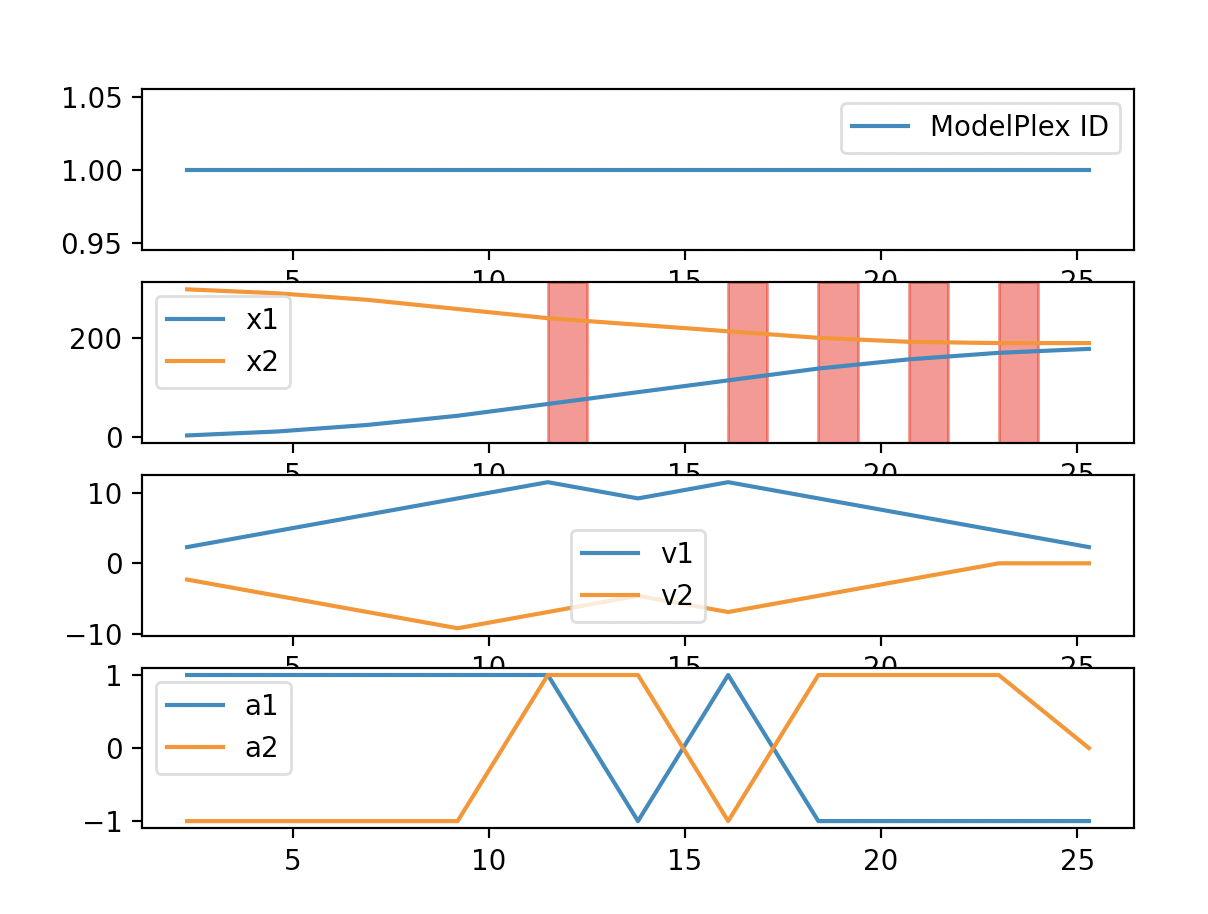}}
    \caption{Car behavior in opposite direction model at $\rho=1$ (left) and $\rho = 6$ (right).}
    \label{fig:samedirectionplot}
\end{figure}
In \rref{fig:samedirectionplot} we consider parameter selection and its effect on the behavior of cars in the opposite direction scenario. 
The controllers of both cars in free driving accelerate towards each other at maximum acceleration and in proper response they both brake at minimum rate. 
With a ``short'' response horizon of $\rho=1$, the final distance between the cars once fully stopped is much lower than when a response $\rho=6$. 

\section{Related Work}
RSS introduced in \cite{shalev-shwartz_formal_2018} has been deployed by the Mobileye research group as the basis for their self-driving car and analyzed extensively from different aspects for its safety and reliablity: \cite{naumann_responsibility_2021} considers parameter analysis and concrete behavior of RSS on German highways; \cite{koopman} extends RSS fidelity by applying Newtonian mechanics to analyze car behavior on slopes and in varying road conditions.

Related work within formal verification of driving models is extensive: \cite{noauthor_towards_nodate} utilize a formula equivalent to the same direction safe distance formula introduced by RSS for an adaptive cruise control system, \cite{DBLP:journals/ijrr/MitschGVP17} analyze collision avoidance while following Dubin's paths, \cite{DBLP:journals/ral/BohrerTMSP19} and \cite{DBLP:conf/fm/KopylovMNW21} formally verify waypoint navigation, \cite{DBLP:conf/amcc/AbhishekSJ20} verify swerving maneuvers to avoid collision, \cite{DBLP:conf/atva/RizaldiISA18} develops a formally verified motion planner, \cite{7995802,DBLP:conf/itsc/LiuWKA22} use reachability analysis in highway settings and introduce a toolbox to create simulated driving environments to determine reachable sets, \cite{DBLP:journals/tiv/KlischatA23} is close to our work in that it uses falsification to find mistakes in motion plans of autonomous vehicles.
We complement these approaches by providing proofs of optimality of the RSS safety conditions and proofs of opposite driving direction, combined with refinement proofs and correct compilation to Python for testing.

\section{Conclusion and Future Work}
In this paper we formalized and proved safety and optimality of the Responsibility-Sensitive Safety control envelopes in both longitudinal directions. 
The safety proofs discover loop invariants, which concisely document design properties of the system. 
These properties of the longitudinal cases do not transfer to the lateral case in a straightforward way, because velocity direction changes do not require an intermediate stop such as a gear change; formalizing and verifying the lateral case is an area of future research. 

To complement the safety guarantees of formal verification with practical tools to test and debug unverified models, as well as to assess the fidelity of verified models in specific driving situations and under specific parameter choices, we introduce refinement proofs and correct compilation from nondeterministic models to deterministic implementations in Python.
Formal methods guarantee correctness in a model of physics, refinement proofs and correct compilation guarantee correctness of implementations, and simulation and testing allow for inspecting the fidelity of the formal model and the concrete behavior of the implementation under specific parameter choices.
Future research includes combination with falsification methods to provide a systematic automated approach to testing of unverified models in differential dynamic logic.

\subsubsection*{Acknowledgements}
This material is based upon work supported by the National Science Foundation under Grant No. CCF2220311.

\newpage
\bibliographystyle{splncs04}
\bibliography{main}

\newpage
\appendix

\section{Correctness of $\texttt{det-HP}$ to Python Compilation}
\label{app:compilation}

For the proof of $\texttt{det-HP}$ compilation we utilize the \dL semantics described in \cite{platzer_differential_2010}. We also use the subset in \rref{fig:pythonsemantics} of the semantic rules found in \cite{kohl_2020}.
\begin{figure}[htb]
\[\infer[\texttt{if-true}]{(\texttt{if True then a else b}) \rightarrow a}{}\]
\[\infer[\texttt{if-false}]{(\texttt{if False then a else b}) \rightarrow b}{}\]
\[\infer[\texttt{while-false}]{[\texttt{False}] c \texttt{while P: a else b} \rightarrow b}{}\]
\[\infer[\texttt{while-true}]{[\texttt{True}] c \texttt{while P: a else b} \rightarrow [\alpha]\ a\ \ \texttt{while P: a else b}}{}\]
\[\infer[\texttt{sequence-first}]{(\texttt{s ; s'}) \rightarrow (\texttt{t ; s'})}{\texttt{s ; t}}\]
\[\infer[\texttt{sequence-elim}]{(\perp \texttt{; s}) \rightarrow \texttt{s}}{}\]
\[\infer[\texttt{assign-exec}]{(\texttt{x = u}, \nu) \rightarrow (\perp, \nu[x \mapsto u])}{}\]
\caption{Semantic rules of Python \cite{kohl_2020}}
\label{fig:pythonsemantics}
\end{figure}

\printProofs[compilation]

\section{Case Study Details}
\label{app:casestudy}

\subsection{Same Longitudinal Direction Safety}
In this section we describe in detail the strategy behind proving the safety of the same longitudinal direction RSS model. 
Since the basis of this model is a loop, we utilize a loop invariant to aid in proving this. 
The invariant $J$ expresses that at each iteration of the loop in \rref{model:safetyProofTemplate}, we know two facts: (1) that car 1 stays behind car 2 and (2) that the latest stopping position of car 1 when applying $-a_{minBrake}$ deceleration is behind the earliest stopping position of car 2 when applying $-a_{maxBrake}$ deceleration. 
Formally, our loop invariant is 
$$J \equiv x_1 \leq x_2 \land x_1 + \frac{v_1^2}{2\amin} \leq x_2 + \frac{v_1^2}{2\amaxb}$$
The induction step of the proof requires us to show $J \limply \dbox{\text{\ref{model:safetyProofTemplate}}}{J}$.
Our proof proceeds by cases according to \rref{model:safetyProofTemplate}: $\texttt{safeDist}(v_1, v_2) \leq x_2 - x_1$ and $\texttt{safeDist}(v_1, v_2) \geq x_2 - x_1$. 
\\
\\
\begin{description}
\item[Case $\texttt{safeDist}(v_1, v_2) \leq x_2 - x_1$]
We need to show the free driving case
\begin{align*}
J \limply [&\ptest{\texttt{safeDist}(v_1,v_2) \leq x_2-x_1;\\
&\passign{(a_1,a_2)}{\texttt{freeDriving};\\
&\passign{t}{0};\\
&\text{motion}}}]J\enspace .
\end{align*}
In this case, we know from \texttt{freeDriving} that the chosen acceleration for both cars is within the overall bounds, which include positive and negative values between the maximum braking parameter and maximum acceleration parameter. 
We recall the fact that the differential equation can run for any amount of time such that the evolution domain constraints are satisfied, i.e., before exceeding time $\rho$. 
We solve the differential equation to obtain the usual kinematic constraints:
\begin{align*}
\tilde{x}_1&=x_1 + v_1t + \frac{1}{2}a_1t^2\\
\tilde{x}_2&=x_2 + v_2t + \frac{1}{2}a_2t^2\\
\tilde{v}_1&=v_1 + a_1t \land \tilde{v}_1 \geq 0\\
\tilde{v}_2&=v_2 + a_2t \land \tilde{v}_2 \geq 0\\
&0 \leq t \leq \rho
\end{align*}

and thus we need to show
\begin{align*}
& \phantom{\land~} \tilde{J}=x_1 + v_1t + \frac{1}{2}a_1t^2 \leq x_2 + v_2t + \frac{1}{2}a_2t^2\\
& \land
x_1 + v_1t + \frac{1}{2}a_1t^2 + \frac{(v_1 + a_1t)^2}{2\amin} \leq x_2 + v_2t + \frac{1}{2}a_2t^2 + \frac{(v_2 + a_2t)^2}{2\amaxb}
\end{align*}
In principle, since quantifier elimination for real arithmetic is decidable \cite{tarski_decision_1998,davenport_real_1988}, the proof could be finished automatically; in practice, however, the proof obligations above are too complicated to terminate in reasonable time. 
In order to split the proof into multiple more tractable subproblems, we relate terms in the loop invariant and the \texttt{safeDist} condition. 
The loop invariant $J$ follows the shape of \texttt{safeDist} closely, with $\rho$ replaced by $t$, $\amax$ replaced by $a_1$ and $\amaxb$ replaced by $a_2$. 
We consider that we substitute the most extreme cases for choice of acceleration in the safe distance formula, which is the case that the two cars are driving towards each other in the most extreme case possible, meaning that these are easily relatable. 
For the time parameter, the position functions of the two cars are monotonic, meaning that we prove a lemma that is equivalent to the safe distance formula with $\rho$ replaced with $t$. 
Specifically, we use a cut to provide the following lemma \eqref{eq:samedirlemma}: 
\begin{equation}\label{eq:samedirlemma}
\max(v_1t + \frac{1}{2}a_{maxAccel} + \frac{(v_1 + t a_{maxAccel})^2}{2a_{minBrake}} - \frac{v_2^2}{2a_{maxBrake}}, 0)\leq x2-x1
\end{equation}

Finally, with the following additional lemmas, which all follow from the fact that $0 \leq t \leq \rho$, $v1 \geq 0$, $v2\geq 0$, $-\amaxb \leq a_1 \leq \amax$, and $-\amaxb \leq a_2 \leq \amax$, we align subterms of \texttt{safeDist} and lemma \eqref{eq:samedirlemma}:

\begin{itemize}
    \item $x_1 + v_1t + \frac{1}{2}a_1t^2 \leq v_1t + \frac{1}{2}a_{maxAccel} $
    \item $ \frac{(v_1 + a_1t)^2}{2\amin} \leq \frac{(v_1 + t a_{maxAccel})^2}{2a_{minBrake}}$
    \item $\frac{(v_2 + a_2t)^2}{2\amaxb} + v_2t + \frac{1}{2}a_2t^2 \geq \frac{v_2^2}{2a_{maxBrake}} $
\end{itemize}

\item[Case $\texttt{safeDist}(v_1, v_2) \geq x_2 - x_1$]

For the proper response branch we utilize the loop invariant stopping distance constraints that we know are true at the start of the iteration. 
The stopping distance portion of the loop invariant solves for the position that each car will be at when it reaches a full stop if the $-\amin$ is applied in car 1 and car 2 stays within its maximum braking (the assumed worst case). This is exactly what the proper response does, as it assigns an acceleration to each car, where car 1 has to brake at least as hard as $-\amin$ and car 2 has to apply some acceleration greater than $-\amaxb$. 
This case closes by solution of the differential equation and quantifier elimination. \qed
\end{description}

\subsection{Opposite Longitudinal Direction Proof Details}
For the model of cars driving in opposite directions, we use a similar proof strategy to the cars driving in the same direction. We define a loop invariant to encode the information that at each iteration the cars will not pass each other, and that if both cars apply minimum braking, they will not collide. This is to say that if after an iteration of the loop, the cars would follow the proper response, they would not collide. Formally the invariant states that $$ J \equiv x_1\leq x_2 \land x_2 - \frac{v_2^2}{2\amin} \geq x_1 + \frac{v_1^2}{2\amin} $$
Consider at the end of an iteration of the loop, we get the proof obligation: 
\begin{align*}
& \phantom{\land~} x_1 + v_1t + \frac{1}{2}a_1t^2 \leq x_2 + v_2t + \frac{1}{2}a_2t^2 \\
& \land x_1 + v_1t + \frac{1}{2}a_1t^2 + \frac{(v_1 + a_1t)^2}{2\amin} \leq x_2 + v_2t - \frac{1}{2}a_2t^2 + \frac{(v_2 + a_2t)^2}{2\amin}
\end{align*}
for $0 \leq t \leq \rho$.
As in the same direction case, there is a free driving and proper response program to the model. 
In the free driving case, we relate terms in the loop invariant to terms in \texttt{safeDist} as follows: 
\begin{itemize}
    \item $v_1t + \frac{1}{2}a_1t^2 \leq v_1\rho + \frac{1}{2}a_{maxAccel} $
    \item $ v_2t + \frac{1}{2}a_1t^2 \geq v_2\rho - \frac{1}{2}a_{maxAccel} $
    \item $ \frac{(v_1 + a_1t)^2}{2\amin} \leq \frac{(v_1 + t a_{maxAccel})^2}{2a_{minBrake}}$
    \item $\frac{(v_2 + a_2t)^2}{2\amin} + v_2t + \frac{1}{2}a_2t^2 \geq \frac{v_2^2}{2a_{minBrake}} $
\end{itemize}

Using these cuts, we are able to show that the distance the cars will travel if needing to come to a complete stop will not cause a collision. Formally, this is the second term in the loop invariant. For the first term, using the fact that the distance that the front car travels is always less or equal to the distance that would be travelled at full acceleration for $\rho$ time and the distance that car 2 travels will always be greater or equal to it travelling at full negative acceleration for $\rho$ time, we are able to show that the new positions of the cars do not collide, i.e. $x_1\leq x_2$. 
The proper response case again follows by quantifier elimination automatically. \qed

\end{document}